\documentclass[twoside]{article}

\usepackage[preprint]{aistats2026}

\usepackage{soul}
\usepackage[utf8]{inputenc} % allow utf-8 input
\usepackage{url}            % simple URL typesetting
\usepackage{booktabs}       % professional-quality tables
\usepackage{amsfonts}       % blackboard math symbols
\usepackage{microtype}      % microtypography
\usepackage{xcolor}         % colors
\usepackage{comment}
\usepackage{subcaption}
\usepackage{natbib}
\usepackage{enumitem}
\usepackage{balance}
\captionsetup[sub]{labelsep=period, subrefformat=brace}

\usepackage{listings}

% Define a custom 'diff' language for listings
\lstdefinelanguage{diff}{
    morecomment=[f][\color{red}]{-},
    morecomment=[f][\color{green}]{+},
    morecomment=[f][\color{gray}]{!}
}

% Optional: Style for diff listings
\lstset{
    language=diff,
    basicstyle=\ttfamily,
    breaklines=true,
}

\lstdefinelanguage{markdown}{
  morekeywords={####,*,-,>},      % common markdown symbols as keywords
  sensitive=false,
  morecomment=[l]{\%},         % LaTeX style comments if needed
  morestring=[b]",             % strings in quotes
}

\lstset{
  language=markdown,
  basicstyle=\ttfamily,
  keywordstyle=\color{blue}\bfseries,
  commentstyle=\color{gray}\itshape,
  stringstyle=\color{red},
  breaklines=true,
  postbreak=\mbox{\textcolor{red}{$\hookrightarrow$}\space},
}

% Attempt to make hyperref and algorithmic work together better:

% For theorems and such
\usepackage{amsmath}
\usepackage{amssymb}
\usepackage{mathtools}
\usepackage{amsthm}
\allowdisplaybreaks
% if you use cleveref..
%\usepackage[capitalize,noabbrev]{cleveref}

%%%%%%%%%%%%%%%%%%%%%%%%%%%%%%%%
% THEOREMS
%%%%%%%%%%%%%%%%%%%%%%%%%%%%%%%%
\theoremstyle{plain}
\newtheorem{theorem}{Theorem}[section]

\newtheorem{lemma}[theorem]{Lemma}

\theoremstyle{definition}

\theoremstyle{remark}
\newtheorem{remark}[theorem]{Remark}

\begin{comment}
\author{%
  Väinö Yrjänäinen \\
  Department of Statistics\\
  Uppsala University\\
  Kyrkogårdsgatan 10, 753 12 Uppsala \\
  \texttt{vaino.yrjanainen@statistik.uu.se} \\
  % examples of more authors
  \And
  Måns Magnusson \\
  Department of Statistics\\
  Uppsala University\\
  Kyrkogårdsgatan 10, 753 12 Uppsala \\
  \texttt{mans.magnusson@statistik.uu.se} \\  % \AND
}
\end{comment}

%\usepackage[textsize=tiny]{todonotes}

\begin{document}

\twocolumn[

\aistatstitle{Iterative Data Curation with Theoretical Guarantees}

\aistatsauthor{ Väinö Yrjänäinen \And Johan Jonasson \And Måns Magnusson }

\aistatsaddress{ Department of Statistics, \\ Uppsala University \And Department of Mathematical Sciences \\ 
Chalmers University of Technology
\And  Department of Statistics, \\ Uppsala University  } ]

\begin{abstract}
In recent years, more and more  large data sets have become available.
Data accuracy, the absence of verifiable errors in data, is crucial for these large materials to enable high-quality research, downstream applications, and model training.
This results in the problem of how to curate or improve data accuracy in such large and growing data, especially when the data is too large for manual curation to be feasible. 
This paper presents a unified procedure for iterative and continuous improvement of data sets. %building on software engineering, machine learning and statistical quality control. 
We provide theoretical guarantees that data accuracy tests speed up error reduction and, most importantly, that the proposed approach will, asymptotically, eliminate all errors in data with probability one. We corroborate the theoretical results with simulations and a real-world use case.
\end{abstract}

\section{Introduction}

% Large and growing data
%Data sets are becoming increasingly large and complex. Technological progress and the ubiquity of computers and the internet allow for the collection of unprecedented magnitudes of all kinds of data.
%Furthermore, ever-growing data is common in the era of big data. Newspapers publish new issues, speeches are given in parliaments, statistical bureaus publish annual updates, new posts are constantly being sent on social media, among many other examples. This means there are data collections that by definition can never be finished.

%\vy{I reorganized the intro. It might need some further reorganization. However, I think it's important to focus on \textit{data accuracy} and define it from the get-go --  there are so many things that go wrong if the reader misunderstands what we mean by it.}

% Why do we need good quality?
High-quality data is crucial for making good predictions \citep{jain2020overview,budach2022effects}, valid statistical inference \citep{hurtado2022documents, bernardi2023data,miller2025guidelines}, decision making \citep{wang2024overview}, and training reliable predictive models \citep{gunasekar2023textbooks}.
Moreover, perhaps the most important aspect of data quality is data accuracy \citep[p.~27]{olson2003data}, which can be seen as the absence of erroneous values.
In this article, we focus on data accuracy and define errors as discrepancies between the stored values and some verifiable ground truth. These may be OCR errors, yielding a discrepancy between printed physical media and its digital representation; label errors, where an image, piece of text, or other piece of data is labeled as the incorrect category; a discrepancy between a value in a database describing a rental apartment and the real measurements of the apartment; missing or invalid values in a database table; or a number of other ways the stored pieces of data differ from some well-defined and verifiable ground truth.
%On the other hand, outliers or other atypical data points are not errors, even though removing or altering them may be useful for eg. some machine learning algorithms.

Despite the importance of data accuracy, data sets commonly used in machine learning, such as Fashion MNIST \citep{xiao2017fashion}, Common Crawl \citep{commoncrawl}, and the Wikipedia corpus \citep{wiki2023index}, often provide few guarantees of their accuracy. Instead, even the most commonly used data sets contain a notable amount of errors \citep{northcutt2021pervasive}.
This is often the case in research data sets as well, eg. parliamentary data \citep{gentzkow2018congressional, erjavec2023parlamint}, legal research datasets \citep{ostling2023cambridge, butler-2025-open-australian-legal-corpus}, or literature corpora \citep{davies_coca} is well-known to contain errors or lack accuracy guarantees.

Improving and verifying the quality of large data sets is a difficult task \citetext{\citealp{kaddour2023challenges}}.
Limited resources force a balance between accuracy, the richness of annotation, and the overall amount of curations that can be made \citep[see ][ for an early example]{voormann2008agile}. Hence, curation becomes increasingly complex on the large data sets that have become commonplace.
Rule-based algorithms, regular expressions, and machine learning methods, are powerful tools for improving data quality on large data sets and can correct many thousands of errors simultaneously \citep{khirbat2017ocr,zhang2020spelling, fante2021quantitative, yun2021re} or filter out poor-quality data \citep{dakka2021automated}. However, such approaches do not have accuracy guarantees and usually come with non-negligible error rates, essentially introducing new errors to the data.
%Such automatic approaches can also be challenging to assess on large, diverse and imperfect data sets, which makes them less reliable in data curation \citep{TODO,TODO}.
Moreover, augmenting or curating a data set with crowdsourcing, non-expert annotators, and external data sources
%, or non-expert annotations
are also possible approaches to curation. However, these approaches also have a non-negligible amount of errors \citep{barchard2011preventing, pmlr-v48-gaoa16,bruhlmann2020quality}. %Thus, existing approaches often fail when applied to large-scale or continuously evolving data.
Hence, it is not possible today to curate data at scale efficiently without also risking the introduction of new errors.

%\vspace{-0.7cm}

\subsection{Our Contributions}

In this work, we address \textit{data accuracy} in large datasets, particularly textual and tabular data, where full manual inspection is infeasible and automated methods must be both scalable and verifiable. % We introduce a unified approach using principles from iterative development and testing, statistical quality control, stochastic process theory and automated curation.
%\begin{comment}
Our main contributions are:
\begin{enumerate}
    \item \textbf{Iterative curation framework:} A structured process for scalable data accuracy improvements in large textual and tabular datasets. 

    \item \textbf{Theoretical guarantees:} We prove the \emph{asymptotic convergence to an error-free dataset}, the \emph{exponential decay of errors} and a guarantee on the \emph{rate of error reduction} on expectation at each revision of the data.

    \item \textbf{Validation:} We support our results with simulations and a real-world case study on the Swedish Parliamentary Corpus.  
\end{enumerate}
%\end{comment}

%We propose structured \emph{iterative data curation} as a method to curate extensive data and formalise data accuracy guarantees based on the evaluation of edits between data versions. We prove that, under mild assumptions, our proposed process will asymptotically eliminate all errors in the data with probability one. We show that this holds for both tabular and text data, providing a theoretical justification for the proposed process. In addition to the theoretical results, we demonstrate the viability of our framework through a case study involving the Swedish parliamentary proceedings corpus. This real-world application highlights the effectiveness of our approach in curating a large, multi-decade corpus of text and metadata.

\section{Iterative Curation of Large Data}

% \vy{This subsection needs to be light, and the "Step X" sections should give detail}

The central idea of iterative curation is to incrementally improve data quality over time, as initially described by \citet{voormann2008agile}, extended to the context of large data sets and automated curation. 
%An example of a large, growing dataset is a collection of scientific articles \citep{clement2019use}. Such corpora are often included in many LLM training datasets \citep{zhou2024survey}, they are huge and increases daily with many different meta-data structures that might be of interest in many different domains (e.g. authors, citations, affiliations etc). In Section \ref{section:case_study}, we give a case study on how the approach has been implemented in practice for the Swedish Parliamentary Corpus.
%We formalize the idea of iterative curation by defining
We define the dataset as a mutable state that evolves through well-defined revisions and study the theoretical properties of this process.
%\mm{mathematical definition here?}.
%Each revision corresponds to a transformation of the data---such as corrections of errors, metadata enhancement, or addition of new data or features---that updates the current state to a new one. This perspective allows us to frame data curation similarly to software development, where each version is a concrete, inspectable state.
\begin{figure}[htb]
    \centering
    \includegraphics[width=0.99\linewidth]{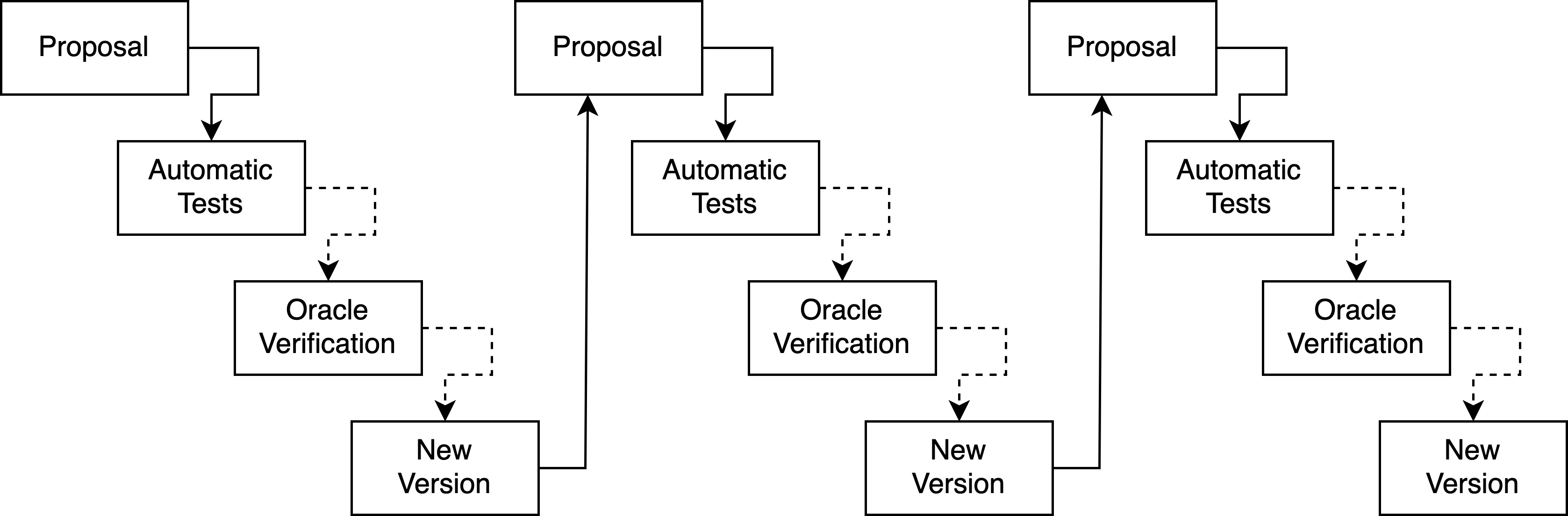}
    \caption{Iterative Data Curation}
    \label{fig:graph-different-proposals}
\end{figure}

The process begins with setting up a prototype data set. After this initial step, the remaining three steps, revision, evaluation, and release, are repeated in short cycles (see Figure \ref{fig:graph-different-proposals}).
To ensure that each iteration leads to measurable improvement, we assume we have access to an oracle to assess whether an individual change, or an \emph{edit}, is correct or true. In practice, the oracle is most likely a human manually assessing whether a change is correct.

This formalism connects straightforwardly to version control, automated testing, and semantic versioning---principles familiar from software engineering---to data. Changes to the dataset, \emph{revisions}, become analogous to git \textit{commits} and each release is documented and versioned. Further, we treat the dataset state as an Application Programming Interface (API). The API defines how users and systems interact with the data: retrieving content, querying metadata, or validating data quality and integrity. Semantic versioning \citep{preston2013semantic} is used to describe the changes to the (data) API by accepted revisions.

\subsection*{Step 1: Set up a Prototype}

The process begins with the creation of a minimal prototype dataset, which serves as the initial state of the data, stored in the data format $\mathcal S$. At this stage, the primary goal is to define a suitable $\mathcal S$---ideally, one that is both human-readable and easily extensible. For example, in the case of a corpus of parliamentary proceedings, a prototype might consist of the full text of a small number of articles encoded in a  ParlaClarin TEI XML schema \citep{erjavec2021parla}. This initial prototype should be small enough to allow thorough manual inspection and validation. Furthermore, the format $\mathcal S$ can consist of multiple formats, such as text documents and metadata files, i.e., $\mathcal S = \mathcal S_1 \times \mathcal S_2$. Once the structure is deemed satisfactory, the initial version of the complete data $S_0 \in \mathcal S$ is created.

%After the prototype dataset is setup and its quality accepted via manual inspection, the rest of the data is converted into the same format using the same scripts. 

% Starting a new data set project or adding a new data type, a small random subset of the data is selected. Prototypes for data structure (eg. XML/JSON schemas) are designed based on this, and a prototype corpus is generated from the subset of data.% The resulting prototype corpus should be of size that is fully inspectable by a single person.

\subsection*{Step 2: Create a Revision Proposal} %\vy{'proposal' distinguishes it better from $S_t$}} \mm{I agree. Its more formal}

% The corpus will then be tracked in version control.
Given the initial state $S_0$, subsequent revision proposals, $R_t\in\mathcal{S}$, are introduced as either (1) additions or (2) corrections, each representing a targeted update to the dataset and implying a proposal $R_{t+1}$ for a new data state $S_{t+1}$. Additions mean that the data is expanded in different directions of interest, extending the format $\mathcal S_{new}  := \mathcal S \times \mathcal S^\prime$, e.g., new speeches or metadata on members of parliament are added in a parliamentary proceedings dataset. Conversely, a correction means that the data state is changed, but the format $\mathcal S$ remains unchanged, e.g., when correcting identified errors.
%In summary, this step is what most people would consider data curation, trying to improve the data quality of the material at hand.
Step 2 results in a new revision proposal $R_{t+1} \in \mathcal S$ from the previous state $S_{t}$.

\subsection*{Step 3: Accept or Reject the Proposal}

Once a proposal $R_{t+1}$ for a new state has been created, the next step is determining whether it should be accepted and applied or rejected. This is done in two steps: (1) automated data tests and (2) (oracle) accuracy control of the proposed revision. The automated tests do not, in general, cost much to run, while the (oracle) accuracy control usually entails costly manual work of assessing the correctness of $R_{t+1}$. Hence, the automated tests are run first, rejecting poor $R_{t+1}$ proposals before the edit sampling and oracle checking.

\emph{Step 3.1: Automatic Data Testing Step}
%Automatic Data Tests are tests run on the data to check the properties we know in advance, or that have previously been assessed by an oracle. This gives rapid feedback to people proposing revisions to the data. 

Automated testing of proposals can take several forms. Some involve format validation, which ensures that the proposal $R_{t+1}$ adheres to the expected structural specifications; in other words, that $R_{t+1} \in \mathcal S$. This may include validating XML, checking that tables contain all required columns, and confirming that no files have been unintentionally removed or added. %These tests ensure the data remains within the format $\mathcal S$.%, denoted as $\mathcal{S}$ in Section \ref{section:theoretical_properties}.
A second example is sanity checks of content, which evaluate whether specific properties of the data are preserved based on prior knowledge. These may include verifying known summary statistics---such as the total number of members of parliament,
%---ensuring that speaker annotations correspond to individuals active during the relevant period,
or checking that previously fixed errors or corrections have not reappeared. 

%There are many different 

%There are two main types of this
%\begin{itemize}
%    \item a. Data format validation
%    \item b. Data quality testing
%\end{itemize}
%The former automatically check that each new revision is of the correct format. It might be XML schema validation, reading CSV files, checking that the required attributes are included in the XML file or the tabular data has all required columns, that the columns are in correct order, or that no documents have been deleted or no incorrect files are added. In Section \ref{section:theoretical_properties}, we refer to this as the data belonging to the correct format $\mathcal S$.

%The latter involve checking the properties of the data  \textit{a priori}. These may be known summary statistics (e.g. the number of MPs in a parliament), known (XML/JSON) schemas, known fixes or parts of the data that is known to be correct (previous manual fixes). In terms of the definitions of Section \ref{section:theoretical_properties}, this refers to knowing something about the true data $S^\star$ a priori.

%These tests can be run automatically and can hence give very quick feedback to people proposing revisions to the data. They make it easy to adhere to the correct format and avoid introducing errors that we can easily identify.

\emph{Step 3.2 Edit Sampling Step}

Suppose the new revision passes the automatic data tests. In that case, we control the quality of a new revision by taking a random sample of the individual \emph{edits}, $\Delta_t = \Delta(R_{t+1}, S_t) = u_1, \dots, u_{\lvert \Delta_t \rvert}$. As the sample is drawn from the set of edits, there needs to be a specific definition of edits for each data format $\mathcal S$. For tabular data, we propose uniform sampling from the set of changed entries (i.e. rows); for sequential data, we propose sampling unit additions and deletions from the addition-deletion diff, obtained via the Myers algorithm \citep{myers1986nd}, e.g. using git \texttt{diff} \citep{git} (see Appendix \ref{appendix:diffsampling} for details).

The sample of edits $u_i$ of size $n$ is checked through the oracle to assess if a proposed edit $u_i$ is correct or not. % how many errors they fix and how many new errors they introduce.
%usually a sample size of 50 to 100 edits,
A revision is accepted, i.e. $S_{t+1} := R_{t+1}$, if the proportion of edits deemed correct by the oracle is sufficiently high, i.e. over a threshold $m \in \{1, \dots, n\}$; otherwise the proposed revision is rejected and $S_{t+1} := S_t$.

%In practice, we propose a sample size of 50 to 100 edits and a revision accuracy of at least 50\%. However, both the sample size and the acceptance threshold depend on the project's theoretical error reduction guarantees, as well as the cost of using the oracle. %In practice, though, we want quite a large proportion of fixes to accept a revision (see Section \ref{section:theoretical_properties}).

\subsubsection*{Step 4: Release a new version}

Finally, if a revision $R_{t+1}$ has been accepted, we release a new state, $S_{t+1}$, of the data with the amended revision.
%Each release is assigned a version number. 
Each release of data states, $S_{t+1}$, is associated with a version number which reflects the type of change compared to the previous version, $S_{t}$. We adapt the semantic versioning by \citet{preston2013semantic} for data as follows:
\begin{itemize}
    \item MAJOR version is increased when changes are made that are not backwards compatible for the API of the data ($\mathcal S$ has changed),
    \item MINOR version when adding functionality, e.g. as new metadata fields or text documents in a backwards-compatible manner ($\mathcal S$ has been extended $\mathcal S^\prime := \mathcal S \times \mathcal S_{new}$), and
    \item PATCH version when you perform fixes or changes that do not introduce new features in the data, e.g. fixing errors in data or adding new data tests ($\mathcal S$ remains unchanged).
\end{itemize}
Similar to standard semantic versioning, only major and minor versioning is followed before the 1.0.0 release and the data is deemed stable upon the 1.0.0 release.

\section{Theoretical Guarantees} \label{section:theoretical_properties}

% Improving data quality is at the center of \textit{Iterative Curation of Large Data}. %We present quality guarantees in terms of mathematical theorems, proofs and bounds. These encompass both a convergence to an errorless dataset, as well as statements about the quality of each iteration of the data.

To study the theoretical properties of the data accuracy under the proposed iterative curation process, we must make assumptions about the data format, the true data and the proposed changes. The analysis relies on the following assumptions
\begin{enumerate}[label=(\alph*)]
    \item There is a canonical form $\mathcal S$ of the data, i.e. there is
    %one way $S \in \mathcal S$ to represent one state of data; i.e.
    a one-to-one mapping from what the data represents to its representation $S \in \mathcal S$. Further, there is a distance $d: \mathcal S \times \mathcal S \to \mathbb N$ in the space $\mathcal S$, and access to the set of edits $\Delta(S, S^\prime)$ (where $\lvert \Delta(S, S^\prime) \rvert = d(S, S^\prime)$) between any two versions or revisions $S, S^\prime$. The distance to the true data $S^\star$ defines the number of errors $E_t := d(S^\star, S_t)$.\label{assumption:canonical}
    %\item \label{assumption:distance}
    \item The set of edits $\Delta_t = \Delta(R_{t+1}, S_t)$ between the proposal $R_{t+1}$ and the current state consists of unit modifications $u_1, \dots, u_{\lvert \Delta_t \rvert}$, which each modify the distances by $\{-1, 0, 1 \}$, and are additive.
    %\footnote{If $\mathcal S$ is a metric vector space, this amounts to actual orthogonality as $d(S+u_1 + u_2, S)^2 = \lVert u_1 + u_2 \rVert^2 \implies \langle u_1 + u_2, u_1 + u_2\rangle = \langle u_1, u_1 \rangle + 2\langle u_1, u_2 \rangle + \langle u_2, u_2 \rangle$. Therefore, we use the notion of orthogonality in distance, even if . }
    in terms of distance: $d(S_t + u + u^\prime, S^\star) - d(S_t, S^\star) = d(S_t + u, S^\star) - d(S_t,S^\star) + d(S_t + u^\prime, S^\star) - d(S_t, S^\star)$ for all possible edits $u \neq u^\prime$. \label{assumption:additivity}
    \item There exists an oracle that can calculate the changes in the distances $d(S_t + u_i, S^\star) - d(S_t,S^\star)$ caused by any single edit $u_i, \forall i \in \{1, \dots, \lvert \Delta_t \rvert \}$. \label{assumption:sample}
    \item The revision size $\lvert \Delta_t \rvert$ is binomially distributed with $E_t$ trials and the success probability $\lambda_t$, where $\lambda_t$ is i.i.d over all timesteps $t$. \label{assumption:revsize}
    \item The number of correct edits $\sum_{u \in \Delta_t} \mathbb I (d(R_{t+1}, S^\star) - E_t = -1)$ for proposals $R_{t+1}$ is binomially distributed with the accuracy $1-r_t$. The proposal error rate $r_t$ is distributed i.i.d. over all timesteps $t$, independently of $\lambda_t$, and has a nonzero density $p(r_t)$ whenever $r_t < 0.5$, bounded from below by $a>0$.\label{assumption:revacc}
\end{enumerate}

%\mm{Väinö, can you reason a little here on these assumptions here. What do they mean in practice? E.g. I would add the poison mixture as a remark on how the assumptions can be relaxed.}

%\mm{We have three things we want to prove/study:
%(1) This approach will converge to dataset with zero errors.
%(2) We want to study how the convergence is being affected by $m$, $p^\star$, $q$, and $n$ in our decision rule. Ideally, what is the optimal choice of these parameters for fastest convergence?
%(3) We want to prove that as we add more data integrity test that needs to be passed first we will converge even faster.}

\begin{theorem} 
\label{theorem:zeroerrors}
Given Assumptions \ref{assumption:canonical}-\ref{assumption:revacc}, when only proposals with at least $m$ correct modifications out of a sample of $n$ are accepted, provided that the following holds for $(n,m)$
\begin{equation}
    a > \frac{C(n,m)}{2 + C(n,m)}
\end{equation}
%\begin{equation}
%    g(n) = ???
%\end{equation}
where $C(n,m)$ is defined as
\begin{equation} \label{eq:c_function}
\begin{aligned}
    C(n,m) &= \frac{2 (n+1)(n+2)}{2m - n} {n \choose m} \\
    &\cdot (r_{max} - 1/2)(1-r_{max})^mr_{max}^{n-m} \\
    r_{max} & = \frac{3n - 2m  +2 + \sqrt{(2m-n)^2 + 4(n + 1)}}{4(n + 1)} 
\end{aligned}
\end{equation}
the number of errors $\lim_{t \to \infty} P(E_t = 0) \to 1$.
\end{theorem}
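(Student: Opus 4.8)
The plan is to reduce the statement to a one-step multiplicative drift estimate and then finish with Markov's inequality. First I would observe that $E_t=0$ is absorbing: by Assumption~\ref{assumption:revsize} the revision size $\lvert\Delta_t\rvert$ is $\mathrm{Bin}(E_t,\lambda_t)$, so $E_t=0$ forces $\lvert\Delta_t\rvert=0$ and hence $E_{t+1}=0$. In particular $\{E_t=0\}\subseteq\{E_{t+1}=0\}$, so $P(E_t=0)$ is nondecreasing in $t$, and since $E_t$ is a nonnegative integer, Markov's inequality gives $P(E_t=0)\ge 1-\mathbb E[E_t]$. The whole theorem then follows once I establish a bound of the form $\mathbb E[E_{t+1}\mid E_t]\le\rho\,E_t$ with a constant $\rho<1$, because iterating yields $\mathbb E[E_t]\le\rho^t\mathbb E[E_0]\to 0$.

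The core computation is the conditional drift $\mathbb E[E_{t+1}-E_t\mid E_t]$. I would condition on $E_t$, on $\lambda_t=\lambda$ and on $r_t=r$. By Assumptions~\ref{assumption:additivity}, \ref{assumption:revsize} and~\ref{assumption:revacc} the proposal consists of $\lvert\Delta_t\rvert\sim\mathrm{Bin}(E_t,\lambda)$ unit edits, each of which independently lowers the distance by one with probability $1-r$ and (in the worst case, which only upper-bounds the drift) raises it by one otherwise; by additivity the signed change upon acceptance has conditional mean $\lvert\Delta_t\rvert(2r-1)$. The acceptance event is governed by the sampled edits through $q(r):=P(\mathrm{Bin}(n,1-r)\ge m)$, the probability that at least $m$ of the $n$ checked edits are deemed correct. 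Using $\mathbb E[\lvert\Delta_t\rvert]=E_t\lambda$ and the independence of $\lambda_t$ and $r_t$ from Assumption~\ref{assumption:revacc}, this yields
\begin{equation*}
\mathbb E[E_{t+1}-E_t\mid E_t]=E_t\,\mathbb E[\lambda]\,\Phi,\qquad \Phi:=\mathbb E_r\big[(2r-1)q(r)\big],
\end{equation*}
so that $\rho=1+\mathbb E[\lambda]\,\Phi$ and the task reduces to showing $\Phi<0$.

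To show $\Phi<0$ I would split the integral defining $\Phi$ at $r=1/2$. On $[0,1/2)$ the factor $(2r-1)$ is negative (correct edits dominate), and Assumption~\ref{assumption:revacc} supplies $p(r)\ge a$ there, so the ``good'' contribution is bounded below in magnitude by $a\int_0^{1/2}(1-2r)q(r)\,dr$. On $(1/2,1]$ the factor is positive---this is the leakage of bad proposals past the filter---and I would bound it above by $\big(\max_{r\ge 1/2}(2r-1)q(r)\big)$ times the remaining probability mass. Controlling $q(r)$ for $r>1/2$ by its dominant binomial term $\binom{n}{m}(1-r)^m r^{n-m}$ reduces the maximization to that of $(r-\tfrac12)(1-r)^m r^{n-m}$; setting its derivative to zero gives the quadratic $2(n+1)r^2-(3n-2m+2)r+(n-m)=0$, whose relevant root is exactly $r_{max}$. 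Packaging the tail bound and the lower-bound constant into $C(n,m)$ and rearranging the inequality ``good $>$ bad'' then produces precisely the threshold $a>\tfrac{C(n,m)}{2+C(n,m)}$, equivalent to $\Phi<0$ and hence to $\rho<1$.

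I expect the main obstacle to be this third step: obtaining the clean closed forms for $C(n,m)$ and $r_{max}$ requires the explicit optimization above together with a careful tail bound on the partial binomial sum $q(r)$, and tracking the constants so that the ``$2+C$'' denominator emerges exactly. A second delicate point, already present in Step~2, is that the sampled edits are a subset of the full edit set, so the acceptance event is not literally independent of the total number of correct edits; this correlation must be handled rigorously, for instance by comparing the hypergeometric sampling to the binomial model of Assumption~\ref{assumption:revacc}, or by noting that conditioning on acceptance only biases accepted proposals toward more correct edits, which makes the drift bound conservative in the favorable direction. Once $\Phi<0$ is secured, the conclusion $\lim_{t\to\infty}P(E_t=0)=1$ is immediate from the geometric decay of $\mathbb E[E_t]$ together with the monotonicity of $P(E_t=0)$ noted at the outset.
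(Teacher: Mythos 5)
Your high-level route is genuinely different from the paper's, and its skeleton is sound. The paper represents $E_t$ as a branching process in random environments with offspring probabilities $p_{0,t}=(1-r_t)\lambda_t$, $p_{1,t}=1-\lambda_t$, $p_{2,t}=r_t\lambda_t$ and invokes the Smith--Wilkinson extinction criterion $\mathbb{E}[\log\mathbb{E}[\zeta]]<0$ (derived from the mean condition via Jensen); you replace that machinery with an elementary drift bound $\mathbb{E}[E_{t+1}\mid E_t]\le\rho E_t$, $\rho<1$, plus Markov's inequality and absorption at $0$. Since the paper only ever verifies the mean condition anyway, nothing is lost, and you avoid the external BPRE citation entirely. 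Your condition $\Phi=\mathbb{E}_r[(2r-1)q(r)]<0$, with $q(r)=P(\mathrm{Bin}(n,1-r)\ge m)$, is exactly the paper's condition $\mathbb{E}[r_t\mid M\ge m]<1/2$ in disguise, so the two proofs meet at the same core inequality. A further point in your favor: your explicit treatment of the correlation between the acceptance event and the realized change (conditioning on acceptance biases the sampled edits toward correctness, so ignoring it only makes the drift bound conservative) is more careful than the paper, which silently treats acceptance as tilting only the posterior of $(r_t,\lambda_t)$.

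The genuine gap is in your third step, precisely where you predicted trouble. You propose to control $q(r)$ on $(1/2,1]$ ``by its dominant binomial term'' $\binom{n}{m}(1-r)^m r^{n-m}$; but that single term is a \emph{lower} bound on the tail sum, not an upper bound, since $q(r)=\sum_{k=m}^n\binom{n}{k}(1-r)^k r^{n-k}\ge\binom{n}{m}(1-r)^m r^{n-m}$. For $r>1/2$ and $m\ge n/2$ the summands decrease in $k$, so the best this route gives is $q(r)\le(n-m+1)\binom{n}{m}(1-r)^m r^{n-m}$, which inflates $C(n,m)$ by a factor of up to $n-m+1$ and proves a strictly weaker theorem than the one stated. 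The fix --- and this is what the paper effectively does --- is to decompose $\Phi=\sum_{k=m}^n\Phi_k$ with $\Phi_k=\int_0^1(2r-1)\binom{n}{k}(1-r)^k r^{n-k}p(r)\,dr$, prove $\Phi_m<0$ under the stated condition (using $p(r)\ge a$ on $[0,1/2)$, $P(r\ge 1/2)\le 1-a/2$, the Beta integral $\int_0^1(1/2-r)\binom{n}{m}(1-r)^m r^{n-m}dr=\tfrac{2m-n}{2(n+1)(n+2)}$, and your maximization, whose quadratic $2(n+1)r^2-(3n-2m+2)r+(n-m)=0$ does match the paper's $r_{max}$), and then extend to $k>m$ not by re-bounding but by monotonicity: the binomial likelihood has the monotone likelihood ratio property, so the posterior of $r$ given $M=k$ is stochastically smaller than given $M=m$, hence $\Phi_k<0$ whenever $\Phi_m<0$, and $\Phi<0$ follows by summation. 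Be warned also that the constant-tracking you flagged is delicate even in the paper: its own derivation reaches $a-C(n,m)(1-a/2)>0$, which rearranges to $a>2C/(2+C)$ rather than the stated $a>C/(2+C)$, so ``producing precisely the threshold'' is not something you could obtain by a clean inequality chain without reproducing that final (apparently slipped) simplification.
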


% Sample, M, posterior of r_t
\begin{proof}
Given Assumptions \ref{assumption:canonical}--\ref{assumption:additivity}, all changes observed in a proposal $R_{t+1}$ can be enumerated, and applying them in any order will yield a difference in distance to the true dataset corresponding to the sum of the distances of them being separately applied to $S_t$. Thus, we can estimate the difference made in the distance $d(R_{t}, S^\star) - d(S_{t}, S^\star)$ by taking a simple random sample (SRS) $u_1, \dots u_n$ of size $n$ from $\Delta(R_{t+1}, S_t)$, and obtaining the distances $d(S_t + u_i, S^\star) - d(S_t, S^\star) \quad \forall i \in \{1, \dots , n\}$ by Assumption \ref{assumption:sample}. The number of correct edits in the sample $M$ is
\begin{align}
M &= \sum_{i=1}^n \mathbb{I}(d(S_t + u_i, S^\star) - d(S_t, S^\star) = -1)
\end{align}
which is binomially distributed $M \sim Bin(r_t, n)$. The posterior distribution of $r_t$ is thus proportional to $p(M = m \mid r_t)p(r_t) \propto (1-r_t)^m r_t^{n-m}p(r_t)$, where $p(r_t)$ is the prior PDF of $r_t$.

% Distribution of \Delta_t, decomposition to a BPRE
By assumption \ref{assumption:revsize}, $\lvert \Delta_t \rvert$ is independent of the revision error rate $r_t$. 
Moreover, as it is binomially distributed $\lvert \Delta_t \rvert \sim B(\lambda_t, E_t)$ conditionally on $\lambda_t$, it can be decomposed to a sum of $E_t$ independent Bernoulli random variables.
By Assumption \ref{assumption:revacc}, the errors given an $\lvert \Delta_t \rvert$ are binomially distributed. $E_{t}$ can thus be represented as a branching process with variable offspring probabilities $p_{0,t}, p_{1,t}, p_{2,t}$ that are only a function of $\lambda_t$ and $r_t$. We obtain a branching process in random environments (BPRE) \citep{smith1969branching}, where $p_{0,t} = (1-r_t) \lambda_t$, $p_{1,t} = 1 - \lambda_t$ and $p_{2,t} = r \lambda_t$. 

% Convergene of BPREs
A BPRE $Z_t \in \mathbb{N}$ where $Z_0 > 0$ is defined as a Markov process where $Z_{t+1}$ is a sum of $Z_t$ i.i.d. random variables $\zeta \in \mathbb N$ with probabilities $p_{0, t}, p_{1, t} \dots$ that are parametrized by $\theta_t$ which are i.i.d for each $t$. Moreover, $\lim_{t \to \infty} P(Z_t=0) = 1$ for all BPREs that have $\mathbb{E}_\theta [\log \mathbb{E}[\zeta]] < 0$ \citep{smith1969branching}. In our case, $\zeta$ is parameterized by $r_t$ and $\lambda_t$. $M$ follows a binomial distribution, thus $m < m^\prime \implies P(r_t \leq r \mid M = m) < P(r_t \leq r \mid M = m^\prime) \quad \forall r \in (0,1)$, and $\mathbb{E}_{r_t}(\mathbb{E}[\zeta] \mid M = m])$ is thus a decreasing function of $m$.
Therefore,
\begin{align}
        \mathbb{E}_{r_t, \lambda_t}[\mathbb{E}[\zeta] \mid M = m] &< 1  \\
        \implies \quad \mathbb{E}_{r_t, \lambda_t}[ \log \mathbb{E}[\zeta] \mid M = m] &< 0 \label{eq:jensen} \\
        \implies \quad \mathbb{E}_{r_t, \lambda_t}[ \log \mathbb{E}[\zeta] \mid M \geq m] &< 0
\end{align}
where step (\ref{eq:jensen}) uses Jensen's inequality.
Consider the expectation $\mathbb{E}_{r_t, \lambda_t}[\mathbb{E}[\zeta] \mid M = m]$
\begin{equation} \label{eq:zeta_expectation}
\begin{aligned}
    &\mathbb{E}_{r_t, \lambda_t}[\mathbb{E}[\zeta] \mid M = m] \\
    &= \mathbb{E}_{r_t, \lambda_t}\Big[\sum_{k=0}^2 k \cdot p_{k,t}  \mid M = m\Big]\\
    &=1 - 2 \mathbb{E}[\lambda_t] \left( 1/2  - \mathbb{E}[r_t \mid M = m] \right)
\end{aligned}
\end{equation}
which is less than 1 whenever the $\mathbb{E}[r_t \mid M= m] < 1/2 \iff \mathbb{E}[1/2-r_t \mid M= m]>0$, 
%
% Derivation for the bound
%If this holds for all time steps $t$, $\lim_{t \to \infty} P(E_t = 0) \to 1$ regardless of $E_0$.
i.e.%Let us investigate when this is the case
%\begin{equation}
%\begin{aligned}
\begin{align}
%&1/2 - \mathbb{E}[r_t \mid M = m] \\
    & \int_0^1 (1/2 - r) p(r \mid m) d r \\
    & \propto \int_0^{0.5} (1/2 - r) p(m \mid r) p(r) d r \\
    &+ \int_{0.5}^1 (1/2 - r)p(m \mid r)  p(r) d r \nonumber \\
    %&\geq \int_0^{0.5} (1/2 - r) p(m \mid r) p(r) d r \\
    %&- \max_r {n \choose m} \left[(r-1/2)(1-r)^mr^{n-m}\right] P(r \geq 0.5) \nonumber \\
    %\end{aligned}
    %\end{equation}
    %\begin{equation}
    %\begin{aligned}
    %&\geq a \int_0^{0.5} (1/2 - r) p(m \mid r) d r \\
    %&- \max_r {n \choose m} \left[(r-1/2)(1-r)^mr^{n-m}\right] P(r \geq 0.5) \nonumber \\ 
    %&\geq a \int_0^{1} (1/2 - r) p(m \mid r) d r \\
    %&- \max_r {n \choose m} \left[(r-1/2)(1-r)^mr^{n-m}\right] P(r \geq 0.5) \nonumber\\
    %&= a \left(\frac{2m - n}{2(n+1)(n+2)}\right) - \max_r {n \choose m} \left[(r-1/2)(1-r)^mr^{n-m}\right] P(r \geq 0.5) \\ 
    &\geq a \Big(\frac{2m - n}{2(n+1)(n+2)}\Big) \\
    &- \max_r {n \choose m}\left[(r-1/2)(1-r)^mr^{n-m}\right] (1-  a/2) \nonumber \\
    &\propto a - \Bigg(\frac{2 (n+1)(n+2)}{2m - n} {n \choose m} \cdot \\
    &\quad \quad \quad \quad \max_r \left[(r-1/2)(1-r)^mr^{n-m}\right] (1-  a/2)\Bigg) \nonumber\\ %\underbrace{}_{C(n,m)}
    %&= a - C(n, m) (1-  a/2)
    &\propto a - C(n, m)/(2 + C(n, m)) > 0
\end{align}
%\end{aligned}
%\end{equation}
See Appendix \ref{appendix:detailed_proof} for a detailed derivation. The maximum of $(r-1/2)(1-r)^mr^{n-m}$ is found at %\vy{Checked graphically}
\begin{equation} \label{eq:r_max}
\begin{aligned}
    r_{max} &= {\arg \max}_r (r-1/2)(1-r)^mr^{n-m} \\
    &= \frac{3n - 2m  +2 + \sqrt{(n-2m)^2 + 4(n + 1)}}{4(n + 1)} 
\end{aligned}
\end{equation}
and thus $C(n,m)$ becomes
\begin{equation} \label{eq:c_function}
\begin{aligned}
    C(n,m) = \Bigg( &\frac{2 (n+1)(n+2)}{2m - n} {n \choose m} \\ &\cdot(r_{max} - 1/2)(1-r_{max})^mr_{max}^{n-m} \Bigg)
\end{aligned}
\end{equation}
Details for the derivation of $r_{max}$, as well as bounds for the value of $C(n,m)$, are provided in Appendix \ref{appendix:derivations}. Finally, the condition for the extinction of $E_t$ becomes
\begin{equation}
    \begin{aligned}
        a > \frac{C(n, m)}{2 + C(n, m)}
    \end{aligned}
\end{equation}
which can be attained for any $a$ as $C(n,m)$ can be made arbitrarily small by increasing $m$ and $n$ (details in Appendix \ref{appendix:derivations}).\end{proof}
%\mm{We should only number equations that actually are referenced to in the text. Also we can comment out lines here if we need to adapt to page limit}

\begin{remark}
A subcritical BPRE decays exponentially, both in terms of the mean $\mathbb{E}[Z_t]$ of the process by timestep $t$ \citep{smith1969branching} and an upper bound for the survival probability $P(Z_t >0)$ \citep{guivarc2001proprietes}. The decay happens wrt. the mean of the offspring probabilities $\mu = \mathbb{E}[\zeta]$, i.e. $E_t \sim \mu^t$, while $\mu$ is
%The rate of error reduction of $\mathbb{E}[E_{t+1}/E_t]$ is 
determined by the posterior distribution of $r_t$ and $\lambda_t$.
%
%$\log \mathbb{E}[E_{t+1}/E_t]$
Specifically, we have $\mu = 1 - 2 \mathbb{E}[\lambda_t](1/2 - \mathbb E[r_t \mid M \geq m]) $ given acceptance. In other words, large and accurate revisions speed up the decay of errors. %Moreover, the survival probability can be bounded from above with a similar exponential decay \citep{guivarc2001proprietes}.
\end{remark}

\begin{remark}
The optimal value of $m$ depends on $n$ and the distribution of $r_t$, and tends to $n/2$ as $n \to \infty$. For small $n$, the optimal $m \in \{n/2 +1, \dots, n\}$ can  be different, as seen in Figure \ref{fig:optimal_m}.
\end{remark}

\begin{figure}[h!]
    \centering
    \begin{subfigure}[b]{0.45\textwidth}
    \includegraphics[width=\linewidth]{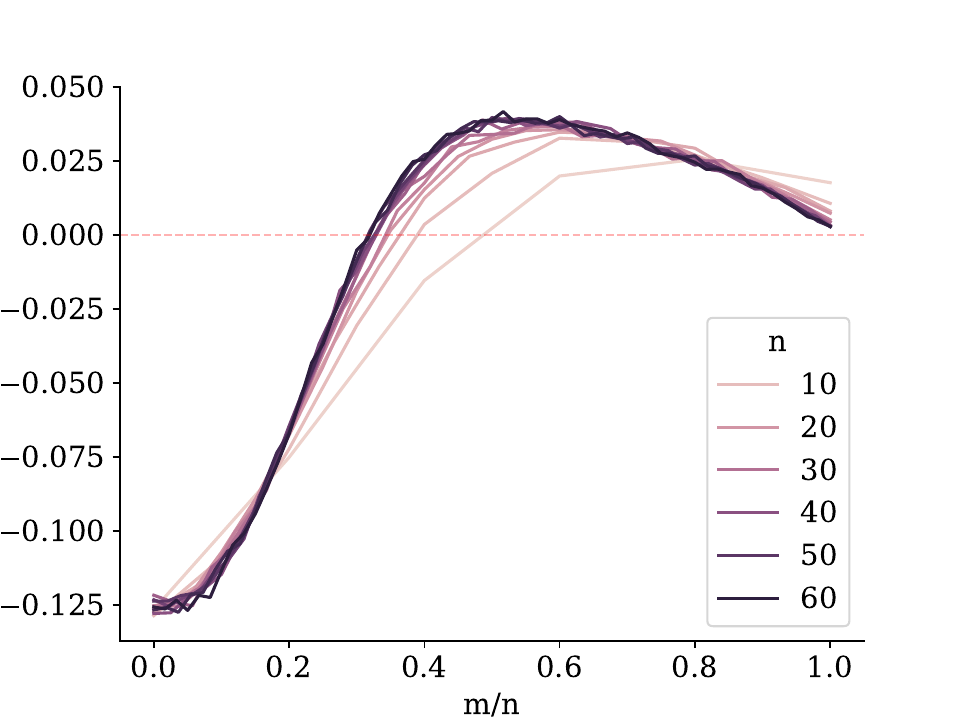}
    %\caption{$r \sim Beta(1,1)$}
    \end{subfigure}%  % Critical comment character
    \hfill
    \begin{subfigure}[b]{0.45\textwidth}
    \includegraphics[width=\linewidth]{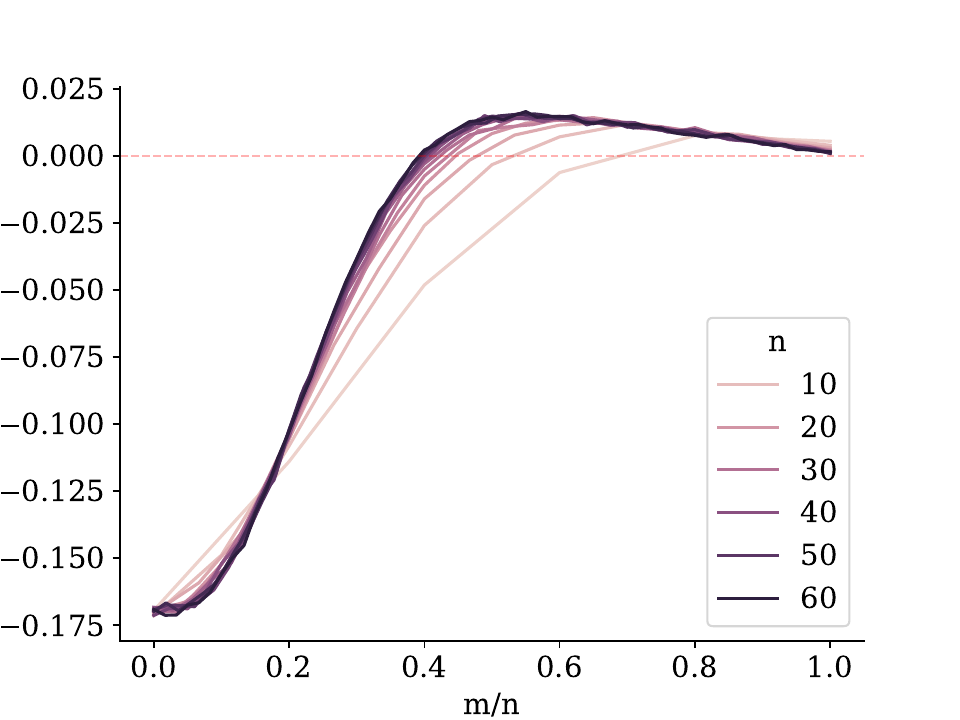}
    %\caption{$r \sim Beta(1,2)$.}
    \end{subfigure}
    \caption{Improvement rate $-\log \mathbb{E}[E_{t+1}/E_t]$, by the acceptance threshold $m/n$. $P(r_t \leq 0.5) = 0.15$ on the left, $P(r_t \leq 0.5) = 0.05$ on the right. Values above the dashed line correspond to convergence, i.e. $E_t \overset{p}{\to} 0$.}
    \label{fig:optimal_m}
\end{figure}

\begin{theorem} \label{theorem:noisy_oracle}
    If we have access to a noisy oracle that with probability $1-\varepsilon$ observes the oracle, and with probability $\varepsilon$ labels any edit as correct, \ref{assumption:canonical}, \ref{assumption:additivity}, \ref{assumption:revsize} and \ref{assumption:revacc} of Theorem \ref{theorem:zeroerrors} hold, and we set the acceptance threshold to $m_{noisy} := m/(1-\varepsilon)$, the number of errors goes to zero, i.e. $\lim_{t \to \infty}P(E_t=0)=1$.
\end{theorem}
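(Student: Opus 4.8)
The plan is to reduce Theorem~\ref{theorem:noisy_oracle} to the analysis already carried out for Theorem~\ref{theorem:zeroerrors}, isolating the single place where oracle noise enters. The key observation is that noise corrupts only the \emph{acceptance decision} (what the oracle reports about a sampled edit), not the \emph{true error dynamics} (whether an applied edit actually removes an error). Hence $E_t$ still evolves as the branching process in random environments from the proof of Theorem~\ref{theorem:zeroerrors}, with offspring mean $\mathbb E[\zeta]=1-2\lambda_t(1/2-r_t)$ governed by the \emph{true} proposal error rate $r_t$, and with $\lambda_t$ independent of $r_t$ and of acceptance. Consequently the whole BPRE apparatus (Smith--Wilkinson extinction via $\mathbb E_\theta[\log\mathbb E[\zeta]]<0$, reduced by Jensen to $\mathbb E[\mathbb E[\zeta]\mid\text{accept}]<1$) transfers verbatim, and it suffices to re-establish the single inequality $\mathbb E[r_t\mid\text{accept}]<1/2$, now with the acceptance event $\{\tilde M\ge m_{noisy}\}$ determined by the \emph{noisy} count $\tilde M$.

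First I would compute the distribution of the noisy count. A sampled edit is truly correct with probability $1-r_t$ and is then always reported correct; a truly incorrect edit (probability $r_t$) is reported correct only through the noise channel, with probability $\varepsilon$. Hence each sampled edit is labelled correct with probability $1-r_t(1-\varepsilon)$, so conditionally on $r_t$ we have $\tilde M\sim\mathrm{Bin}\big(n,\,1-r_t(1-\varepsilon)\big)$. This is exactly the noiseless observation model of Theorem~\ref{theorem:zeroerrors} but in the \emph{effective} error rate $\tilde r_t:=r_t(1-\varepsilon)\in[0,1-\varepsilon]$: rewriting gives $\tilde M\sim\mathrm{Bin}(n,1-\tilde r_t)$, the critical value $r_t=1/2$ maps to $\tilde r_t=(1-\varepsilon)/2$, and the prior density of $\tilde r_t$ is bounded below by $a/(1-\varepsilon)$ on the good region $\{\tilde r_t<(1-\varepsilon)/2\}$ (since $p(r_t)\ge a$ there and the change of variables contributes the Jacobian $1/(1-\varepsilon)$). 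The target $\mathbb E[r_t\mid\text{accept}]<1/2$ then becomes the structurally identical statement $\mathbb E[\tilde r_t\mid\tilde M\ge m_{noisy}]<(1-\varepsilon)/2$.

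Next I would re-run the posterior-mean computation of Theorem~\ref{theorem:zeroerrors} in these effective coordinates: condition first on $\tilde M=m_{noisy}$, split $\int\big((1-\varepsilon)/2-\tilde r\big)\,p(\tilde r\mid m_{noisy})\,d\tilde r$ at the shifted critical value $(1-\varepsilon)/2$, lower-bound the good-region contribution using the density bound $a/(1-\varepsilon)$, upper-bound the magnitude of the bad-region contribution by the maximum of $\big(\tilde r-(1-\varepsilon)/2\big)(1-\tilde r)^{m_{noisy}}\tilde r^{\,n-m_{noisy}}$ times the remaining prior mass $\le 1-a/2$, and then extend from $\{\tilde M=m_{noisy}\}$ to $\{\tilde M\ge m_{noisy}\}$ by the same monotone-likelihood-ratio argument ($\mathrm{Bin}(n,1-\tilde r)$ still has MLR in $\tilde r$, so $\mathbb E[\tilde r_t\mid\tilde M=k]$ is decreasing in $k$). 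The role of the prescribed threshold $m_{noisy}=m/(1-\varepsilon)$ is to place the mode $1-m_{noisy}/n$ of the binomial weight strictly below the shifted critical value $(1-\varepsilon)/2$, exactly as the condition $m>n/2$ does in the noiseless case; moreover $m/(1-\varepsilon)$ is at least as large as the natural method-of-moments ``de-noising'' threshold $(1-\varepsilon)m+\varepsilon n$ whenever $m\ge n/2$, so the inflated threshold is strictly more selective and, by the MLR monotonicity in $k$, only pushes $\mathbb E[\tilde r_t\mid\text{accept}]$ further below the critical value.

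The main obstacle is this last re-derivation: verifying that, after the shift of the critical value to $(1-\varepsilon)/2$ and the rescaling of the prior, the same bounding strategy still yields a ``bad-region'' correction dominated by the positive ``good-region'' term, i.e. that $m_{noisy}=m/(1-\varepsilon)$ preserves a $C(n,m)$-type sufficient condition implied by the hypothesis of Theorem~\ref{theorem:zeroerrors}. Two minor technical points must be handled inside this step: rounding $m/(1-\varepsilon)$ up to an integer threshold (which only tightens acceptance), and checking that $m/(1-\varepsilon)\le n$ so that the acceptance region retains positive probability on the good region (otherwise no proposal is ever accepted and the process is trivially, but uselessly, extinct). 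Once $\mathbb E[r_t\mid\tilde M\ge m_{noisy}]<1/2$ is secured, Jensen's inequality and the Smith--Wilkinson criterion give $\lim_{t\to\infty}P(E_t=0)=1$ exactly as in the proof of Theorem~\ref{theorem:zeroerrors}.
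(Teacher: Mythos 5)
Your setup is sound and matches the paper's noise model: a truly incorrect edit is reported correct with probability $\varepsilon$, so each sampled edit is labelled correct with probability $1-r_t(1-\varepsilon)$, which is exactly the paper's likelihood $((1-\varepsilon)(1-r)+\varepsilon)^m((1-\varepsilon)r)^{n-m}$; and you are right that the noise corrupts only the acceptance decision, so the BPRE machinery of Theorem \ref{theorem:zeroerrors} transfers once $\mathbb{E}[r_t\mid\text{accept}]<1/2$ is re-established. However, your proof stops exactly where the theorem's real content begins. After the change of variables $\tilde r=(1-\varepsilon)r$, the critical value shifts to $(1-\varepsilon)/2$, the prior bound rescales to $a/(1-\varepsilon)$ on a region that no longer matches the region where Assumption \ref{assumption:revacc} gives you the bound, and the good-region Beta integral evaluates to $\frac{1}{n+1}\bigl(\frac{1-\varepsilon}{2}-\frac{n-m_{noisy}+1}{n+2}\bigr)$ rather than $\frac{2m-n}{2(n+1)(n+2)}$, while the bad-region term becomes $\max_{\tilde r}\bigl(\tilde r-\tfrac{1-\varepsilon}{2}\bigr)(1-\tilde r)^{m_{noisy}}\tilde r^{\,n-m_{noisy}}$. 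Whether this modified good term still dominates the modified bad term under the hypothesis $a>C(n,m)/(2+C(n,m))$ of Theorem \ref{theorem:zeroerrors} is precisely the claim to be proved; the comparison is not monotone in any obvious way (the exponents and the offset move in opposite directions), and you explicitly defer it as ``the main obstacle.'' As written, the proposal is therefore a plan with the central inequality missing, not a proof.

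The idea that closes this gap — and the route the paper actually takes — avoids redoing the integrals altogether. The paper proves a monotone-likelihood-ratio comparison (Lemma \ref{lemma:alt_posterior}): the noisy posterior $\propto((1-\varepsilon)(1-r)+\varepsilon)^m r^{n-m}p(r)$ is stochastically dominated by the surrogate posterior $\propto(1-r)^{(1-\varepsilon)m}r^{n-m}p(r)$, which is \emph{exactly} a posterior of the form analysed in Theorem \ref{theorem:zeroerrors} with adjusted counts ($m_{noisy}=(1-\varepsilon)m$, with the threshold inversion giving the stated $m/(1-\varepsilon)$ rule). Stochastic dominance then gives $\mathbb{E}[r\mid\text{noisy}]\le\mathbb{E}[r\mid\text{surrogate}]<1/2$ by the already-proved argument, so the offspring mean in Eq. (\ref{eq:zeta_expectation}) stays below one and extinction follows, with no new $C$-type condition to derive. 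If you want to salvage your own route, you must either carry out the shifted-coordinate bound in full (the good term can be shown non-decreasing in $\varepsilon$ when $m\ge n/2+1$, but the bad term needs a real argument), or substitute a dominance lemma of this kind for your deferred step. Two side conditions you mention should also be promoted to explicit hypotheses: $m/(1-\varepsilon)\le n$ (otherwise no proposal is ever accepted and $P(E_t=0)=0$ for $E_0>0$), and the integer rounding of the threshold, which only tightens acceptance and is therefore harmless.
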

\begin{proof}
A noisy oracle yields the following posterior
\begin{equation} \label{eq:noisy_posterior}
\begin{aligned}
p(r \mid M=m) \propto ((1-\varepsilon)(1-r) + \varepsilon)^m ((1-\varepsilon)r)^{n-m} \\
 \propto ((1-\varepsilon)(1-r) + \varepsilon)^m r^{n-m}
\end{aligned}
\end{equation}

Due to Lemma \ref{lemma:alt_posterior} presented in the Appendix, this posterior stochastically dominates the following posterior
\begin{equation} \label{eq:noise_free_posterior}
\begin{aligned}
p(r \mid M=m) &\propto  (1-r)^{(1 - \varepsilon)m} r^{n-m} \\
&=  (1-r)^{(1 - \varepsilon)m} r^{(n + \varepsilon m) - (1 - \varepsilon)m} \\
&=  (1-r)^{m_{noisy}} r^{n_{noisy} - m_{noisy}} \\
\end{aligned}
\end{equation}
which in turn corresponds to the posterior of Theorem \ref{theorem:zeroerrors} obtained with $m_{noisy} := m (1-\varepsilon), n_{noisy} := n + \varepsilon m$. Inverting the mapping, noisy observations correspond to $m = m_{noisy}/(1-\varepsilon)$ and $n = n_{noisy} - \frac{\varepsilon}{1-\varepsilon} m_{noisy}$. The stochastical dominance means that the expectation of $r$ with the posterior of Eq. (\ref{eq:noisy_posterior}) is smaller than with the posterior of Eq. (\ref{eq:noise_free_posterior}), which means the expectation of Eq. (\ref{eq:zeta_expectation}) is going to be below $1$ given the assumptions \ref{assumption:canonical}, \ref{assumption:additivity}, \ref{assumption:revsize} and \ref{assumption:revacc} of Theorem \ref{theorem:zeroerrors}.
\end{proof}

From Theorem \ref{theorem:noisy_oracle} we can see that we can easily account for a noisy oracle. This is , as expert knowledge perfect.

In Theorem \ref{theorem:zeroerrors}, we do not specify the data format $\mathcal S$. On the other hand, assumptions \ref{assumption:canonical} and \ref{assumption:additivity} clearly depend on the data format. For that reason, we want to verify them for two common data formats: fixed-sized tables of arbitrary values, and sequences of text. For the former, the Hamming distance is a natural metric as corresponds to the number of incorrect values; for the latter, the addition-deletion edit distance (\textit{Levenshtein distance} without substitutions) is a natural choice as it tells the number of edits needed to get to the correct data $S^\star$.
\begin{theorem} 
\label{theorem:tabulardata}
Tabular data, using the Hamming distance $d(S, S^\prime) = \sum_{i=1}^N \mathbb{I} (s_i \neq s_i^\prime)$, fulfills the assumptions \ref{assumption:canonical}-\ref{assumption:revacc} of Theorem \ref{theorem:zeroerrors}.
\end{theorem}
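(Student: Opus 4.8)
The goal is to verify that the five assumptions \ref{assumption:canonical}--\ref{assumption:revacc} hold for tabular data under the Hamming distance. Since Theorem \ref{theorem:zeroerrors} already does the heavy lifting (the BPRE convergence argument), all that remains here is a verification exercise: I would go through the assumptions one at a time, checking each against the structure of fixed-size tables. The plan is to treat a table as a fixed-length tuple $S = (s_1, \dots, s_N)$ of cell values, with the true table $S^\star = (s_1^\star, \dots, s_N^\star)$, so that the Hamming distance $d(S, S^\star) = \sum_{i=1}^N \mathbb{I}(s_i \neq s_i^\star)$ literally counts the number of incorrect cells, i.e. $E_t = d(S^\star, S_t)$ is the error count.

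For \ref{assumption:canonical} I would argue that a table of fixed size with entries drawn from some value space has a canonical representation as the tuple of its cells, giving the required one-to-one correspondence between content and representation; the Hamming distance maps into $\mathbb{N}$ and the edit set $\Delta(S, S^\prime)$ is naturally the set of positions $i$ where $s_i \neq s_i^\prime$, so $\lvert \Delta(S, S^\prime) \rvert = d(S, S^\prime)$ holds by construction. The crucial step is \ref{assumption:additivity}: here I would define a unit edit $u$ as a change to a single cell $s_i$, and observe that because the Hamming distance decomposes as a sum over independent coordinates, editing cell $i$ changes $d(\cdot, S^\star)$ only through the $i$-th term of that sum. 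Thus an edit at position $i$ contributes $+1$, $0$, or $-1$ depending on whether it moves that cell away from, keeps it at the same distance from, or toward $s_i^\star$; and because distinct edits $u \neq u^\prime$ act on distinct coordinates (two edits to the same cell would not be independent single-cell modifications in a diff), their effects on the total distance simply add, which is exactly the additivity identity stated in \ref{assumption:additivity}. This coordinate-wise separability of the Hamming distance is the heart of the argument and the one place where the specific metric matters.

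For \ref{assumption:sample} I would note that the oracle, by definition, can inspect any single edited cell and compare it to the ground truth, thereby computing $d(S_t + u_i, S^\star) - d(S_t, S^\star) \in \{-1, 0, 1\}$ for each sampled edit; this is immediate once edits are single-cell changes. Assumptions \ref{assumption:revsize} and \ref{assumption:revacc}, however, are distributional assumptions about the proposal-generating mechanism rather than properties of the data format itself, so for a clean proof I would remark that they are modeling assumptions imposed on the revision process and are not contradicted by the tabular/Hamming setting---any proposal process on a table can be taken to satisfy them by fiat, since they merely posit binomial revision sizes and binomial correctness counts. The main obstacle is conceptual rather than computational: I would need to argue carefully that the ``additivity'' of \ref{assumption:additivity} genuinely holds, which requires the edits in $\Delta_t$ to touch pairwise-distinct cells so that no two edits interact. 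This is guaranteed by the tabular edit model (one edit per changed row/entry) but deserves explicit justification, since a naive definition allowing multiple edits to the same cell would break additivity. Once that separability is established, the remaining verifications are routine, and the theorem follows by invoking Theorem \ref{theorem:zeroerrors} directly.
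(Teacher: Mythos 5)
Your proposal is correct and follows essentially the same route as the paper: flatten the table into a vector of cells, exploit the coordinate-wise separability of the Hamming distance to verify additivity (with the same observation that two edits to the same cell must be merged into one, i.e.\ distinct edits act on orthogonal coordinates), and dismiss Assumptions \ref{assumption:sample}--\ref{assumption:revacc} as properties of the revision process rather than of the data format. Your explicit treatment of the oracle assumption and the remark on why same-cell edit collisions would break additivity are slightly more detailed than the paper's main text, but they match the computation the paper defers to its appendix.
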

\begin{proof}
Tabular data consists of a matrix $A \in \mathbb{R}^{N \times L}$. This is the data format of Assumption \ref{assumption:canonical}, for which the suitable distance is the number of non-equal entries, ranging from $0$ when the tables are equal to $L \cdot N = K$ when each entry is different. %This is the distance of Assumption \ref{assumption:canonical}.

Without loss of generality, we can consider vectors $v \in \mathbb{Z}^K$ equipped with the metric $d(v, v^\prime) = \sum_{j=1}^K \mathbb{I}(v_j \neq v_j^\prime)$.
All possible unit modifications $u$ are integer-scaled one-hot vectors, so $u = z \cdot [0, \dots, 0, 1, 0, \dots, 0]^T$. These changes are orthogonal; two non-orthogonal changes constitute one change. We thus can see that they are also additive in terms of the distances for Assumption \ref{assumption:additivity} (see Appendix \ref{appendix:detailed_proof} for details).

The remaining assumptions do not concern the data structure and are fulfilled for tabular data.
\end{proof}

\begin{comment}
\begin{equation}
\begin{aligned}
    d(v + u_1 + u_2, v^\prime) - d(v, v^\prime) &= \sum_{j=1}^K \mathbb{I}(v_j + u_{1j} + u_{2j} \neq v_j^\prime) - \mathbb{I}(v_j \neq v_j^\prime) \\
    %&= \sum_{j=1}^K \mathbb{I}( u_{1j} \neq 0) + \mathbb{I}( u_{1j} \neq 0) \\
    &=\sum_{j=1}^K \Big[\mathbb{I}(u_{1j} = u_{2j}) [\mathbb{I}(v_j + 2u_{1j} \neq v_j) - \mathbb{I}(v_j \neq v_j^\prime) ]\\
    &\quad\quad+\mathbb{I}(u_{1j} \neq u_{2j}) \big[ \mathbb{I}(v_j + u_{1j} \neq v_j) - \mathbb{I}(v_j \neq v_j^\prime) \\
    &\quad\quad+ \mathbb{I}(v_j + u_{2j} \neq v_j) - \mathbb{I}(v_j \neq v_j^\prime) \big]\Big] \\
    %&=\sum_{j=1}^K \big[ \mathbb{I}(v_j + u_{1j} \neq v_j) - \mathbb{I}(v_j \neq v_j^\prime) \\
    %&\quad\quad+ \mathbb{I}(v_j + u_{2j} \neq v_j) - \mathbb{I}(v_j \neq v_j^\prime) \big]\Big] \\
    &= d(v + u_1, v) - d(v, v^\prime) + d(v + u_2, v) -d(v, v^\prime)
    \end{aligned}
\end{equation}
\end{comment}

For Theorem \ref{theorem:zeroerrors} to hold for sequential data, such as textual data, we need further assumptions
\begin{enumerate}[label=(\alph*)]
    \setcounter{enumi}{5}
    \item The true sequence $S^\star$ only contains distinct elements, i.e. $S^\star = [s^\star_1, \dots , s^\star_N]$, and $i\neq j \implies s^\star_i \neq s^\star_j \quad \forall (i,j) \in \{1, \dots, N\}$ \label{assumption:distinct}
    \item Each version $S_t$ only contains distinct elements \label{assumption:duplicated_addition} 
    \item No version $S_t$ contains any two elements $(s_i^\star, s_j^\star)$ of $S^\star$ in incorrect order \label{assumption:permutation} 
\end{enumerate}
%the addition-deletion distance works relatively straightforwardly.
%In these cases, it counts how many elements need to be deleted and added, which is not affected by the order of these operations.
%Assumptions \ref{assumption:distinct}---\ref{assumption:permutation} guarantee that no actual, proposed or ground truth data state violates this. 

\begin{theorem}
\label{theorem:sequencedata}
Sequences of text, using the addition-deletion edit distance, fulfil the assumptions \ref{assumption:canonical}-\ref{assumption:revacc} of Theorem \ref{theorem:zeroerrors} given further assumptions \ref{assumption:distinct}-\ref{assumption:permutation}. 
\end{theorem}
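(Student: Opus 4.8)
The plan is to verify Assumptions \ref{assumption:canonical} and \ref{assumption:additivity} directly, since the remaining Assumptions \ref{assumption:sample}--\ref{assumption:revacc} concern only the oracle and the sampling model, not the data structure, and therefore carry over verbatim from the tabular case in Theorem \ref{theorem:tabulardata}. The heart of the argument is to show that, under Assumptions \ref{assumption:distinct}--\ref{assumption:permutation}, the addition-deletion edit distance collapses to a simple count over the symmetric difference of element sets, after which both the unit-modification property and additivity become transparent.

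First I would invoke the standard characterization of the addition-deletion (insertion-deletion) edit distance via the longest common subsequence, $d(S, S^\star) = |S| + |S^\star| - 2\,|\mathrm{LCS}(S, S^\star)|$. Because $S^\star$ and each $S_t$ contain only distinct elements (Assumptions \ref{assumption:distinct} and \ref{assumption:duplicated_addition}), any common subsequence is built from distinct shared elements, so $|\mathrm{LCS}(S_t, S^\star)| \le |S_t \cap S^\star|$, with equality exactly when the shared elements occur in the same relative order in both sequences. Assumption \ref{assumption:permutation} guarantees precisely this ordering condition, so $|\mathrm{LCS}(S_t, S^\star)| = |S_t \cap S^\star|$ and hence $E_t = d(S_t, S^\star) = |S_t \setminus S^\star| + |S^\star \setminus S_t|$, the number of spurious elements plus the number of missing true elements. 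This is a well-defined integer-valued metric on the canonical representation, which establishes Assumption \ref{assumption:canonical} and simultaneously identifies $E_t$ with the true error count.

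Next I would classify the unit edits returned by the addition-deletion diff (Myers' algorithm): each is either the insertion or the deletion of a single element. Deleting a spurious element, or inserting a missing true element (necessarily in its correct position by Assumption \ref{assumption:permutation}), decreases $d$ by one; deleting a true element, or inserting a spurious one, increases $d$ by one; and inserting an already-present element is excluded by Assumption \ref{assumption:duplicated_addition}. Thus every edit changes the distance by exactly $\pm 1 \in \{-1,0,1\}$, giving the unit-modification part of Assumption \ref{assumption:additivity}. For additivity proper, the key observation is that the edits within a single diff act on pairwise-distinct elements, so each edit $u$ contributes a change to $|S| + |S^\star| - 2|S \cap S^\star|$ that depends only on the element it touches; as Assumptions \ref{assumption:duplicated_addition} and \ref{assumption:permutation} keep the reduced symmetric-difference formula valid on the intermediate states, the joint effect of $u$ and $u'$ is the sum of their individual effects, which is exactly Assumption \ref{assumption:additivity}.

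The step I expect to demand the most care is this last additivity claim: one must argue that composing several edits never introduces a duplicate or an order violation that would break the identity $|\mathrm{LCS}| = |S \cap S^\star|$ on the intermediate and final states. This is precisely the role of Assumptions \ref{assumption:duplicated_addition} and \ref{assumption:permutation}, which rule out the problematic configurations; a single transposition, for instance, would make a pair of otherwise-correct elements cost two edits rather than one and thereby destroy additivity. The bulk of the written proof would consist of checking that, with distinctness and order preserved, the four edit types above compose without interaction, and I would defer this routine case analysis to an appendix, mirroring the tabular proof in Theorem \ref{theorem:tabulardata}.
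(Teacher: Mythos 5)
Your proof is correct (to the same standard of rigor as the paper's own) but reaches the conclusion by a genuinely different route. The paper also starts from the identity $d(S,S^\prime) = \lvert S\rvert + \lvert S^\prime\rvert - 2\lvert \mathrm{LCS}(S,S^\prime)\rvert$, but its key tool is a uniqueness lemma for the LCS (Appendix \ref{appendix:lcs}), after which it verifies Assumption \ref{assumption:additivity} by an explicit case analysis on pairs of edits---two additions, two deletions, and a mixed addition--deletion pair---tracking how each pair changes the LCS. You instead collapse the distance outright: under Assumptions \ref{assumption:distinct}--\ref{assumption:permutation}, $\lvert\mathrm{LCS}(S_t,S^\star)\rvert = \lvert S_t \cap S^\star\rvert$, hence $d(S_t,S^\star) = \lvert S_t \setminus S^\star\rvert + \lvert S^\star \setminus S_t\rvert$, and both the $\pm 1$ unit-effect property and additivity become element-wise counting facts, since each edit's contribution depends only on the single element it touches. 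This effectively reduces the sequence case to the set-membership setting that the paper itself uses for a different data type (indexed tabular data with symmetric set difference, in its appendix), and it buys a more unified argument: no case analysis over edit-type pairs, and the reason additivity holds (locality over distinct elements) is stated once rather than re-derived per case. What the paper's case analysis buys in exchange is explicit attention to the one genuinely dangerous configuration---an addition and a deletion of the same element---which it dismisses by arguing such pairs never occur in a delta. Note that both arguments share the same soft spot, which you correctly flag and defer: one must check that intermediate states $S_t + u$ still satisfy the distinctness and ordering conditions so that the collapsed formula (or the unique-LCS property) applies to them; for example, a transposition of two spurious elements (not constrained by Assumption \ref{assumption:permutation}) yields a delta that deletes and re-inserts the same element, momentarily creating a duplicate. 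A fully airtight write-up, yours or the paper's, would need that composition check spelled out.
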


\begin{proof}
%For any sequence of characters, words, or bytes \vy{what?}. Furthermore, the data can be merged into chunks $K$ characters, words or bytes. These choices can be chosen as the canonical data format of Assumption \ref{assumption:canonical}. Furthermore, 
The addition-deletion edit distance is defined for any two finite sequences of countable elements. It belongs to $\mathbb N$, i.e. $\mathcal S$ is equipped with a suitable metric $d$ for Assumption \ref{assumption:canonical}.
Furthermore, the edit distance can be obtained via the longest common subsequence (LCS)
\begin{equation}
    d(S, S^\prime) = \lvert S \rvert + \lvert S^\prime \rvert - 2 \lvert LCS(S, S^\prime) \rvert 
\end{equation}

To verify Assumption \ref{assumption:additivity} for any two edits $(u_1,u_2)$, there are three possible cases: two additions, two deletions and an addition and a deletion. If the LCS is unique, we can inspect these cases via the effects of all potential pairs $(u_1, u_2)$ on the LCS. This holds for sequences consisting of distinct elements that do not contain reordered elements of each other (proof in Appendix \ref{appendix:detailed_proof}), as guaranteed by Assumptions \ref{assumption:distinct} and \ref{assumption:permutation}.

In the case of two erroneous additions, regardless of the order or position of these additions, $E_t$ always increases by 2, given that the true data $S^\star$ and $S_t$ only have distinct elements, guaranteed by Assumptions \ref{assumption:distinct}--\ref{assumption:duplicated_addition}. This is because additions do not change the longest common substring if they are distinct from all elements of $S^\star$.
Similarly, two correct additions are going to yield the same decrease of $-2$ to the distance, since they both increase the length of the LCS by 1, given that their order is not inverted. This is guaranteed by Assumption \ref{assumption:permutation}. Thus in the case any two additions $u_1, u_2$, $d(S_t + u_1 + u_2, S^\star) - d(S_t, S^\star) = d(S_t + u_1, S^\star) - d(S_t, S^\star) + d(S_t + u_2, S^\star) - d(S_t, S^\star)$.

%The second case is two deletions.
Two deletions act independently, as it does not matter which one is done first; both either increase the edit distance by one if they are incorrect, or decrease it by one if they are correct, and the sum of their effects is the effect of their sums. Thus, $d(S_t + u_1 + u_2, S^\star) - d(S_t, S^\star) = d(S_t + u_1, S^\star) - d(S_t, S^\star) + d(S_t + u_2, S^\star) - d(S_t, S^\star)$ for any two deletions $u_1$ and $u_2$.

The third and final case involves the interactions of additions and deletions. The deletion and addition of two different elements do not interact, and thus, they are additive in terms of distance.
The addition and subsequent deletion of the same element cancel each other and are not present in any deltas. Thus, additivity holds for all possible edits.
%, the inversion of the order is not possible, and thus, the equation holds for all possible edits.

The remaining assumptions in Theorem \ref{theorem:zeroerrors} do not concern the data structure, and can also be fulfilled for text sequences.
Thus, text sequences fulfil the assumptions of Theorem \ref{theorem:zeroerrors}.
\end{proof}

\begin{remark}
$\mathcal S$ must be defined so that Assumptions \ref{assumption:distinct}--\ref{assumption:duplicated_addition} in Theorem \ref{theorem:sequencedata} are reasonable. One way to do this is to organise a sequence into larger chunks. If a random sequence is split into chunks of $K$ characters (or words), the probabilities of its composite elements can be made arbitrarily small. For example, chunks of $K$ binary elements have $2^K$ possible categories, and their probabilities decrease exponentially as $K$ is increased, guaranteeing that the elements in the sequence are distinct.
\end{remark}
%\mm{why important?} \vy{It's a necessar}
%The chunks can be of a variable but bounded size, so that the distance metric does not become arbitrary.
%In practice, we find that text rows capped to a maximum length work well, allowing for uniqueness, human readability and the use of standard tools such as \texttt{diff} and \texttt{git}.

%Assumption \ref{assumption:permutation} in Theorem \ref{theorem:editdistance} is also nontrivial, and not necessarily true for all possible datasets. However, it can be guaranteed with sufficiently large chunk sizes, along with some automated testing: if there is a reference data $S^\prime$ with a bounded number of errors, such as an OCR'd version of the data, or a PDF conversion, this can be used to automatically bound the edit distance. All chunks with a sufficiently large character-level distance can then be deemed as non-duplicates, which then makes sure that the current data $S_t$ is not a permutation of the true data $S^\star$.

%Another way to verify Assumption \ref{assumption:permutation} is to split the data into pages or paragraphs that can be verified to line up with the true data. Finally, for some data sets, Assumption \ref{assumption:permutation} is also reasonable at its face value: for example, if all errors originate from typos, the permutation of the rows just does not happen and the assumption is fulfilled.

%\vy{Potential additions: Unit tests speed up convergence.}
%\mm{Yes. I think adding this would be very good! This would be my prio 1.}

\begin{theorem} 
%\vy{How much work would it be to fix this? Should we just skip it?}
For tabular data, the number of errors $E_t^\prime$ with the tests converges faster than the number of errors without the tests $E_t$, i.e. $P(E_t=0 \mid E_0 = E) < P(E_t^\prime=0 \mid E_0^\prime = E) \quad \forall t, E \in \mathbb N$.
\label{theorem:tests}
\end{theorem}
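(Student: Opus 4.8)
The plan is to build on the branching-process-in-random-environment (BPRE) representation from the proof of Theorem~\ref{theorem:zeroerrors}. By Theorem~\ref{theorem:tabulardata}, the error count is, both with and without the Step~3.1 tests, a BPRE started from $E_0=E_0'=E$, with offspring on $\{0,1,2\}$ and probabilities $(p_{0,t},p_{1,t},p_{2,t})=((1-r_t)\lambda_t,\,1-\lambda_t,\,r_t\lambda_t)$, where the value $2$ is the error-introduction channel. The sanity check ``previously fixed errors have not reappeared'' targets exactly this channel: an edit reverting an already-corrected entry is flagged and blocked. Under the Hamming metric and Assumption~\ref{assumption:additivity}, fixes and reintroductions occur on disjoint cells and act additively, so blocking a reintroduction neither touches nor forfeits the fixing edits; it merely converts an offspring-$2$ outcome at a protected cell into an offspring-$1$ (no-change) outcome. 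Hence the with-test offspring law $(p_0',p_1',p_2')$ satisfies $p_0'=p_0$ and $p_2'\le p_2$, so it is first-order stochastically dominated by the without-test law.

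First I would make this precise by conditioning on the shared environment $(\lambda_t,r_t)$, under which the two offspring laws differ only by relocating mass from $2$ to $1$. I would then place both chains on a common probability space via the standard monotone coupling of branching processes: for each error-particle draw a single uniform and read its offspring off the two CDFs, so the with-test particle never produces more children than its without-test counterpart. Using the same environment draws and iterating generation by generation yields $E_t'\le E_t$ almost surely for every $t$, whence $\{E_t=0\}\subseteq\{E_t'=0\}$ and therefore $P(E_t'=0\mid E_0'=E)\ge P(E_t=0\mid E_0=E)$.

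To upgrade this to the strict inequality claimed (for $E\ge1$, $t\ge1$), I would exhibit a positive-probability event on which the coupled chains separate before the without-test chain is absorbed at $0$: with positive probability some early generation draws an environment with $r_t$ large enough that the without-test chain realises a protected-cell reintroduction that the test suppresses, giving $E_t'<E_t$ there. Combined with the subcriticality of Theorem~\ref{theorem:zeroerrors} and the exponential survival bounds of \citet{smith1969branching,guivarc2001proprietes}, this makes $\{E_t'=0\}\setminus\{E_t=0\}$ have positive probability, yielding the strict gap.

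The main obstacle is that Step~3.1 rejects an \emph{entire} proposal rather than a single edit, so a revision that simultaneously corrects several entries but reverts one protected entry is discarded wholesale, forfeiting the good fixes and possibly leaving the without-test chain with \emph{fewer} errors on that step; naive pathwise domination then fails. The crux is to show that, for tabular data, the additivity and disjoint-cell structure of Assumption~\ref{assumption:additivity} let one decouple the reintroduction channel from the fixing channel---equivalently, that the reappearance test is realisable as a per-cell filter dropping only the offending edits---so that $p_0'=p_0$ genuinely holds and the clean per-particle coupling above survives. Failing that, one must retreat to the weaker route of showing that conditioning on passing the test stochastically lowers $r_t$, hence the offspring mean $\mu$, and transfer this to the marginal law of $E_t$ by a BPRE comparison; verifying that the per-cell decoupling is legitimate is where the real work lies.
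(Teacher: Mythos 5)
Your main route fails at exactly the point you flag in your last paragraph, and the failure is not repairable in the form you propose: the Step~3.1 tests are \emph{not} realisable as a per-cell filter. A revision that fixes several entries but reverts one tested entry is rejected \emph{wholesale}, so the with-test dynamics are not obtained by moving offspring mass from $2$ to $1$ particle by particle; conditioning on ``tests pass'' changes the joint law of all edits in the revision (and the law of the environment $(r_t,\lambda_t)$ given acceptance), not the law of the offending particle alone. In particular $p_0'=p_0$ is false: rejection forfeits the fixing edits too, so on a given step the with-test chain can end up with strictly fewer fixes than the without-test chain, and the monotone per-particle coupling that would give $E_t'\le E_t$ almost surely does not exist. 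Additivity of the Hamming distance (Assumption~\ref{assumption:additivity}) says edits act on disjoint cells; it does not let you accept a proposal cell by cell, because the accept/reject decision in the procedure is a single global event. Consequently the modified offspring law, the first-order stochastic dominance claim, the pathwise coupling, and the strictness upgrade built on top of it all rest on a premise that contradicts the procedure being analysed.

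The fallback you mention in a single sentence is, in fact, the paper's entire proof, and it is the part that needed to be carried out. The paper models the tests as tuples $(w_{j1},w_{j2})$ asserting known values of specific dimensions; passing the tests is then statistically equivalent to having observed $J^*$ additional edits (those lying in tested dimensions but outside the oracle sample) and found them all correct, with $P(J^*>0)>0$ whenever $E_t\neq 0$. The acceptance likelihood becomes proportional to $(1-r)^{J^*+m}r^{n-m}$ rather than $(1-r)^m r^{n-m}$, so the posterior of the error rate satisfies $p'(r_t\mid m)\propto(1-r_t)^{J^*}\,p(r_t\mid m)$ and has a strictly smaller expectation; the paper then uses the monotone likelihood ratio property of the (binomial) error-reduction law to conclude that $E_{t+1}'$ is stochastically smaller than $E_{t+1}$ given $E_t=E_t'>0$, in particular $P(E_{t+1}'=0)>P(E_{t+1}=0)$, and iterates over $t$ to obtain Theorem~\ref{theorem:tests}. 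Note that this is a comparison of marginal (conditional) distributions, not a pathwise coupling, which is precisely why it survives the wholesale-rejection issue that kills your construction. As written, your proposal names this route but does none of the work, so there is a genuine gap.
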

\begin{proof}
Without loss of generality, we consider $K$-dimensional vectors $v \in \mathbb Z^{K}$ of integers as $\mathcal S$. Let the data tests be a set of tuples $W = \{w_1, \dots w_J\}, 0 \leq J \leq K, \quad w_j \in \{1, \dots, K\} \times \mathbb Z$. The tests then \textit{pass} when $\sum_{j=1}^J \mathbb I(v_{w_{j1}} \neq w_{j2}) = 0$.

For each sample, $J^\prime, 0 \leq J^\prime \leq J$, the changes happen in the dimensions where there is a data test. Furthermore, $J^*, 0 \leq J^* \leq J^\prime$ of these tests are both in dimensions where changes happen and not included in the sample. Moreover, $P(J^\prime > 0) > 0$ for all timesteps $t$ unless $E_t = 0$, and subsequently, since sampling is independent of the test $P(J^* > 0) > 0$ if $E_t \neq 0$. If only revisions where the tests pass are accepted, the likelihood is proportional to $B(J^* + m, n-m)$ instead of $B(m, n-m)$.

%Let $E_{tests}$ and $E_{no-tests}$ be the number of errors when using the tests and not using the tests;
Let $p^\prime(r_t \mid m)$ and $p(r_t \mid m)$ be the posteriors of the error rates corresponding to $E_t^\prime$ and $E_t$, respectively. Since the posterior $p^\prime(r \mid m) \propto (1-r)^{J^*} p(r \mid m)$, the posterior expectation $\mathbb E[p^\prime(r_t)]$ is smaller than the posterior expectation $\mathbb E[p(r_t)]$.
%, i.e. $P(r^\prime \leq x) \geq P(r \leq x) \quad \forall x \in (0,1)$.
Moreover, $E_t-E_{t+1}$ and $E_t^\prime-E_{t+1}^\prime$ are binomially distributed with the rate parameter $r_t \lambda_T$, satisfying the monotone likelihood ratio property. This means the $E_{t+1}^\prime$ is stochastically smaller than $E_{t+1}$ for all timesteps given any $E_t = E_{t}^\prime > 0$, i.e. $P(E_{t+1}^\prime \leq x) > P(E_{t+1} \leq x) \quad \forall x \in \mathbb N$, including $x = 0$ which denotes convergence.
\end{proof}

Theorem \ref{theorem:tests} illustrates the importance of adding tests as step 3.1 in the Proposal accept/reject step. By adding automated tests we will automatically reject poor proposals and speed up the convergence.

%\vy{Forumlating / proving the same for the Levenshtein distance seems really difficult.}

\section{Experiments}

To assess the proposed iterative curation framework, we perform two simulation studies, which allow us to study the convergence behaviour of the method under varying assumptions. Moreover, we demonstrate the practical applicability of our approach via a use case involving the Swedish Parliamentary Corpus, a large dataset with both textual and tabular data \citep{yrjanainen2024swedish}. The code for the experiments is available in the Supplementary material. %GitHub\footnote{\url{https://anonymous.4open.science/r/data-process-6FF0}}.

\subsection{Simulation Study}

First, we simulate the number of errors $E_t$ directly according to Theorem \ref{theorem:zeroerrors}, and set the  error rate $r_t \overset{i.i.d.}{\sim} Beta(\alpha, \beta)$, where $\mathbb{E}(r_t) = \alpha / (\alpha + \beta)$.
The results are presented in Figure \ref{fig:theoretical_simulation}. As expected from a BPRE process, the decay of the errors is exponential. Moreover,  the threshold $m/n = 0.6$ outperforms $m/n = 0.5$ when $n=10$, while $m/n = 0.5$ outperforms $m/n = 0.6$ when $n=50$ (see Appendix \ref{appendix:figures}).

\begin{figure}[h!]
    \centering
    \begin{subfigure}[b]{0.45\textwidth}
    \includegraphics[width=\linewidth]{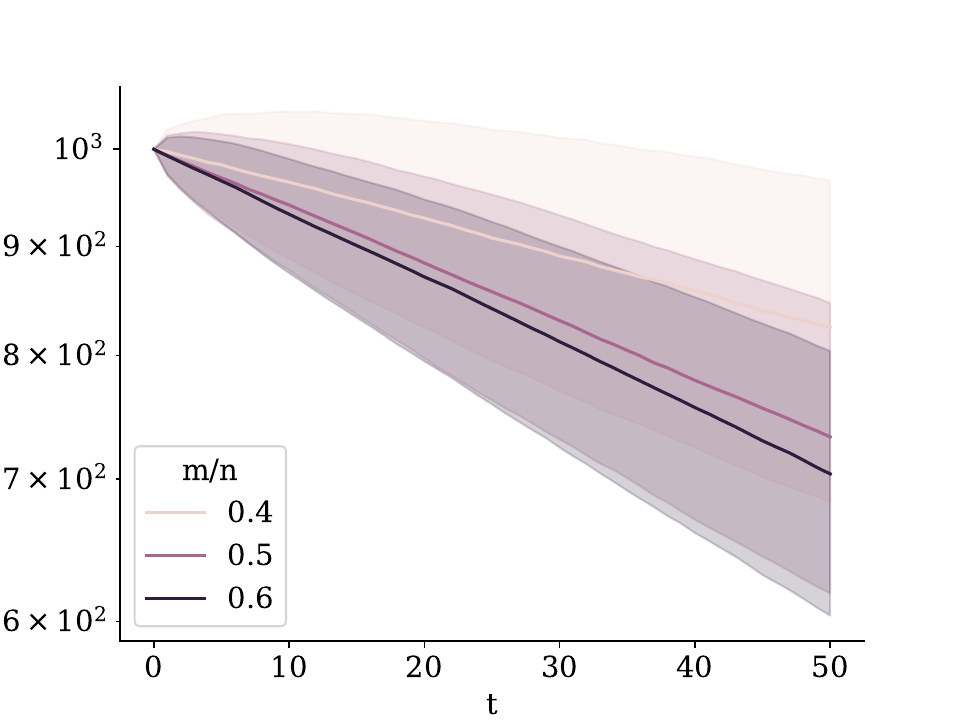}
    \caption{$r_t \sim Beta(2,1)\,,\mathbb{E}(r_t)=2/3$}
    \label{fig:theoretical_simulation}
    \end{subfigure}%  % Critical comment character
    \hfill
    \begin{subfigure}[b]{0.45\textwidth}
    \includegraphics[width=\linewidth]{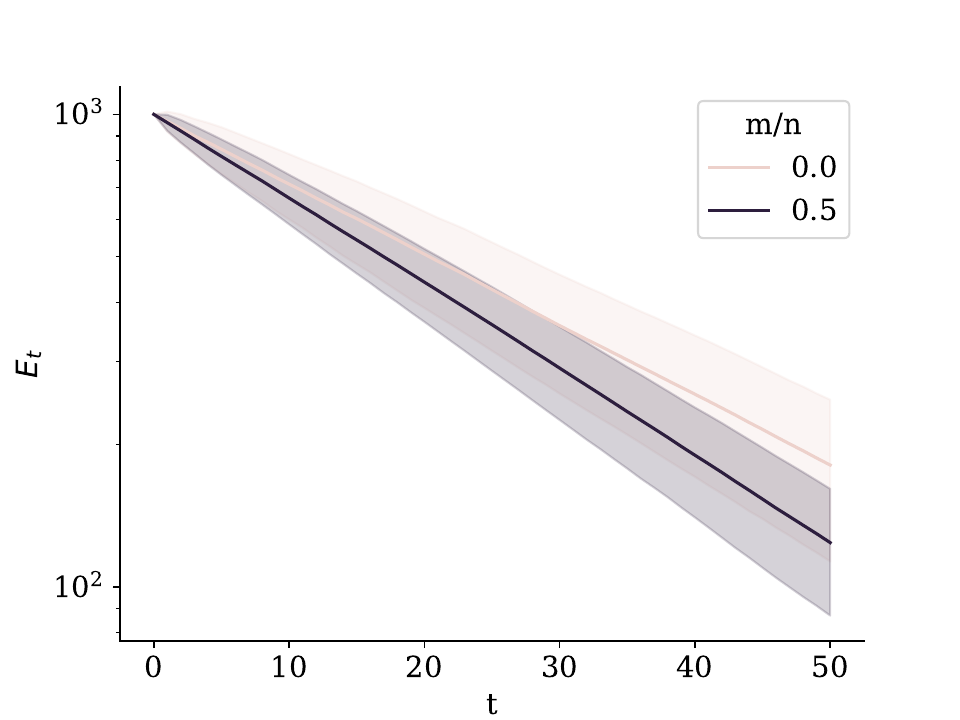}
    \caption{$r_t \sim Beta(1,2)\,,\mathbb{E}(r_t)=1/3$.}
    \label{fig:theoretical_simulation_goodproposals}
    \end{subfigure}
    \caption{Number of errors $E_t$ in the data as a function of $t$, $n=10$. The exponential decay of the errors is seen on the logarithmic $y$ axis.}
    %\label{fig:theoretical_simulation}
\end{figure}
Furthermore, Figure \ref{fig:theoretical_simulation_goodproposals} shows that even if the proposals are good enough to guarantee convergence, using the decision rule speeds it up. This is the case both for $n=10$ and $n=50$, where $m/n = 0.5$. We found this to be the case even with highly optimistic  $r_t$, such as $Beta(1,4)$, where 94\% of the proposals would improve the data (see Appendix \ref{appendix:figures}). In addition, a simulation with a noisy oracle was done, where the errors converged similarly to a matching process with a true oracle, in line with Theorem \ref{theorem:noisy_oracle} (see Appendix \ref{appendix:figures}).

As a second simulation, we generate a sequence $S^\star$ of 1,000,000 words sampled from the empirical distribution of the English Wikipedia.
It includes lowercase words, spaces, and line breaks.
\begin{figure}[tbh!]
    \centering
    \begin{subfigure}[b]{0.450\textwidth}
    \hspace*{-0.5cm}
    \includegraphics[width=\linewidth]{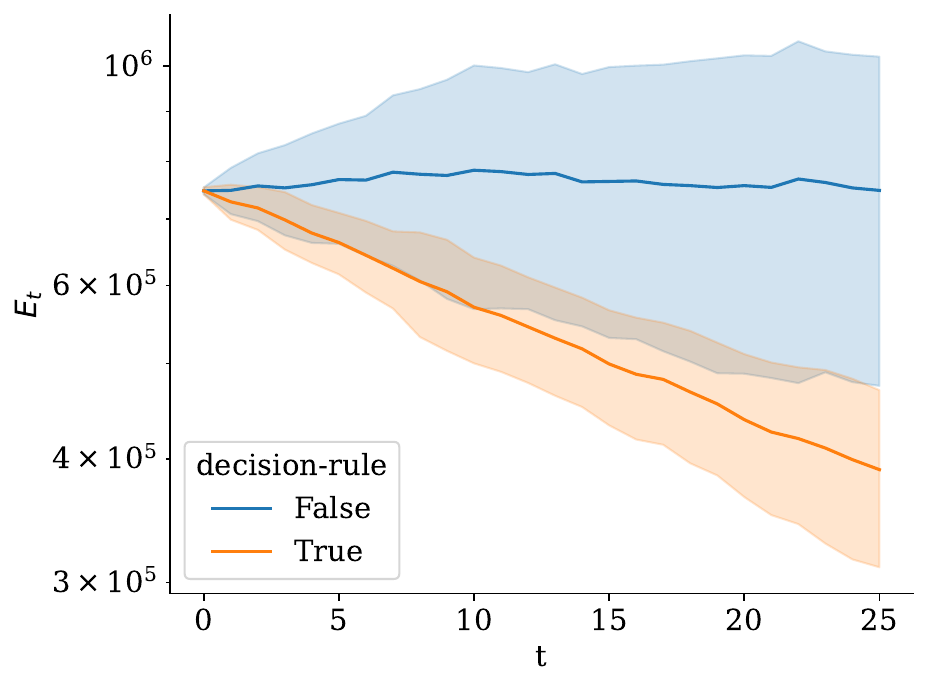}
    \caption{$r_t \sim Beta(1,1)\,,\mathbb{E}(r_t)=1/2$}
    \end{subfigure}%  % Critical comment character
    \hfill
    \begin{subfigure}[b]{0.450\textwidth}
    \hspace*{-0.5cm}
    \includegraphics[width=\linewidth]{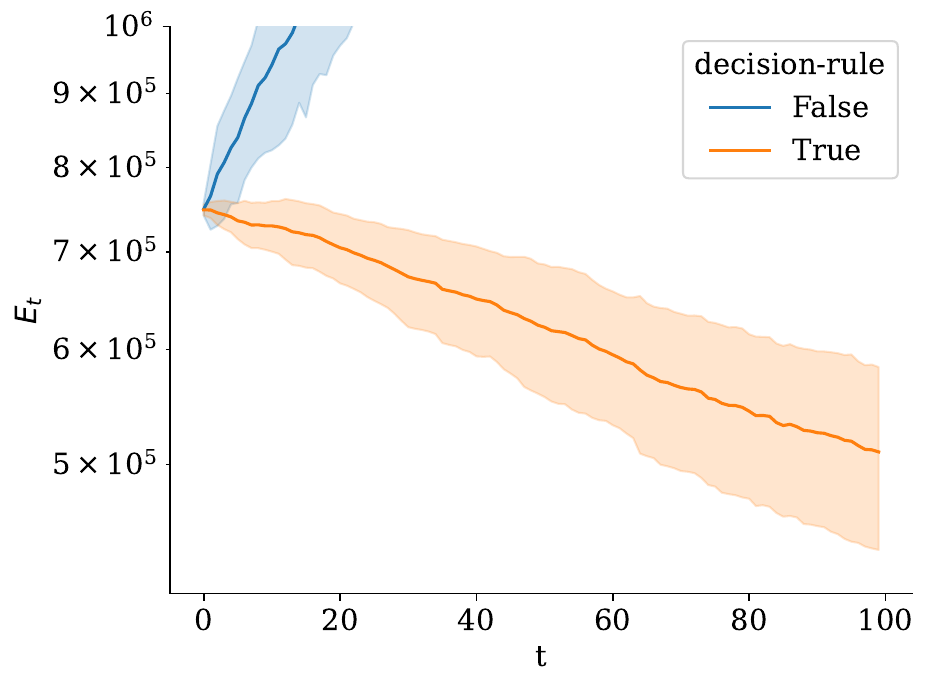}
    \caption{$r_t \sim Beta(5,3)\,,\mathbb{E}(r_t)=5/8$.}
    \end{subfigure}
    \caption{Errors in the simulated text dataset. Decision rule in orange (using $n=50, m=25$), no decision rule in green. 50 simulation runs, or until $E_t=10^6$. }
    \label{fig:text_simulation}
\end{figure}
To create an imperfect copy of $S^\star$ to be curated, we remove, add and change words with probability $1/300$ each.
%The corruption method (removal, addition and swapping) is chosen randomly for each word,
This results in an initial data set with roughly 10,000 errors. For the deltas, we use the Myers difference algorithm on the word level.
Figure \ref{fig:text_simulation} shows the evolution of $E_t$ in the simulated text data over time. With the decision rule $m/n=0.5$, $E_t$ decreases exponentially, both in terms of the mean and the quantiles. This holds for different distributions of $r_t$ (see Appendix \ref{appendix:figures} for details). Without the decision rule, when $r_t \sim Beta(1,1)\,,\mathbb{E}(r_t)=1/2$, the mean stays static and variance increases with $t$. When $r_t \sim Beta(5,3)\,,\mathbb{E}(r_t)=5/8$, $E_t$ increases exponentially.
%Without the decision rule, the number of errors varies drastically and often diverges, showing the process where many large-scale automated corrections are made without the rejection step. The process is also susceptible to the proposal distribution, where worse proposals lead to a rapidly deteriorating data set (see Appendix \ref{appendix:figures}). %especially highlighted in Figure \ref{fig:graph-beta-3-5}, with proposal ratios $r$ following $\text{Beta}(1,2)$ and $\text{Beta}(3,5)$, respectively.

\subsection{Case Study: The Swedish Parliamentary Records}
\label{section:case_study}

%\vy{Unfortunately this is not fully done yet.}

As a practical use case, we present the curation of the Swedish parliamentary proceedings. In the corpus
%\footnote{Available on GitHub at \url{https://github.com/swerik-project/riksdagen-records}}
, the iterative approach has been used, and proposals are evaluated against automated testing and a random sample of 50 edits \citep[see][for details]{yrjanainen2024swedish}. The  material is crucial for research in numerous scientific fields and Swedish parliamentary proceedings is utilised in training Swedish encoder models \citep{malmsten2020playing}, audio-to-text models, and Nordic LLMs \citep{ohman2023nordic}.
The data comprises 17,938 parliamentary records between 1867 and 2024 with over 500 million words, in ParlaClarin XML format \citep{erjavec2021parla}, and metadata in a CSV format.

Mapping each speech in the text to the speaker in the metadata database ("speaker mapping") is a quality metric of particular importance to applied researchers in the social sciences.
An introduction is marked as 'unknown' if we can identify a speech but not map it to an identified MP.
Figure \ref{fig:swerik_mapping} shows the number of unknown speakers over time, i.e. the reduction of missing annotations in the data. This accuracy has consistently improved over time (timestep $2$ shows an increase due to the removal of false positives).

\begin{figure}[h!]
    \centering
    \begin{subfigure}[b]{0.450\textwidth}
    \hspace*{-0.6cm}
    \includegraphics[width=\linewidth]{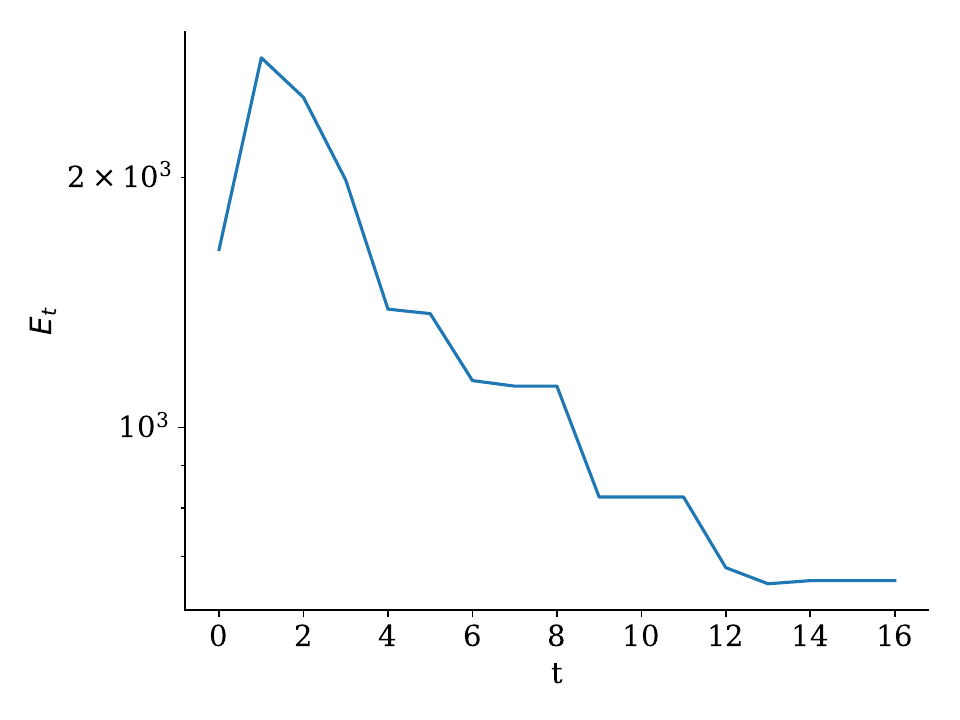} 
    \caption{Speaker mapping errors}
    \label{fig:swerik_mapping}
    \end{subfigure}%  % Critical comment character
    \hfill
    \begin{subfigure}[b]{0.450\textwidth}
    \hspace*{-0.6cm}
    \includegraphics[width=\linewidth]{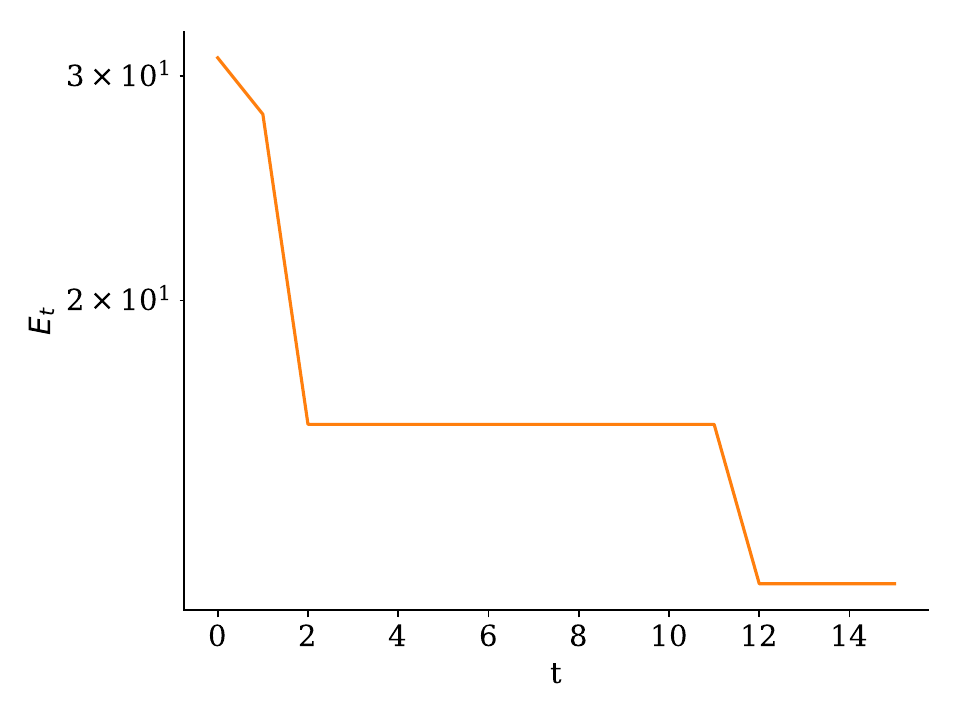}
    \caption{Paragraph classification errors}
    \label{fig:swerik_noteseg}
    \end{subfigure}
    \caption{Errors over time in the Swedish Parliament data. The $y$ axis is logarithmic to show the exponential decay of errors.}
    \label{fig:swerik}
\end{figure}

In addition to speaker introductions, the records contain transcriptions of speech, as well as non-speech elements such as transcriber notes, titles and margin notes. All paragraphs are classified into these classes ("paragraph classification"). 
It is not trivial to know which paragraphs are transcriptions of speech---the part of data that is of interest to most users---and thus they need to be programmatically detected due to the size of the corpus. The accuracy of this classification process is another data accuracy dimension of interest. Figure~\ref{fig:swerik_noteseg} shows the progression of the errors in this classification over the course of the corpus curation. The errors have either been reduced or remained the same for each new version.

\section{Conclusion}

We propose a scalable framework for iterative dataset data accuracy improvements. It supports iterative, high-quality updates with minimal manual overhead by
%treating data as a mutable state and
evaluating revisions via automated tests and targeted sampling. 
Furthermore, we provide theoretical guarantees for the exponential decay of errors and  convergence to an error-free dataset, the first such results to the best of our knowledge. Finally, we demonstrate the method’s effectiveness via simulations and a real-world case study. Our approach is especially suited for large, evolving datasets where data accuracy is of direct importance, with immediate impact on robust social science research, machine learning benchmarks and high-quality LLM training corpora.

\vspace{0.4cm}

\newpage
% \nocite{*}
%\label{main:ref}

%\bibliographystyle{lrec2022-bib}
%\bibliography{sample}

% Acknowledgements should only appear in the accepted version.

% In the unusual situation where you want a paper to appear in the
% references without citing it in the main text, use \nocite

\bibliographystyle{unsrtnat}
\bibliography{process_references}

\begin{thebibliography}{36}
\providecommand{\natexlab}[1]{#1}
\providecommand{\url}[1]{\texttt{#1}}
\expandafter\ifx\csname urlstyle\endcsname\relax
  \providecommand{\doi}[1]{doi: #1}\else
  \providecommand{\doi}{doi: \begingroup \urlstyle{rm}\Url}\fi

\bibitem[Jain et~al.(2020)Jain, Patel, Nagalapatti, Gupta, Mehta, Guttula, Mujumdar, Afzal, Sharma~Mittal, and Munigala]{jain2020overview}
Abhinav Jain, Hima Patel, Lokesh Nagalapatti, Nitin Gupta, Sameep Mehta, Shanmukha Guttula, Shashank Mujumdar, Shazia Afzal, Ruhi Sharma~Mittal, and Vitobha Munigala.
\newblock Overview and importance of data quality for machine learning tasks.
\newblock In \emph{Proceedings of the 26th ACM SIGKDD international conference on knowledge discovery \& data mining}, pages 3561--3562, 2020.

\bibitem[Budach et~al.(2022)Budach, Feuerpfeil, Ihde, Nathansen, Noack, Patzlaff, Naumann, and Harmouch]{budach2022effects}
Lukas Budach, Moritz Feuerpfeil, Nina Ihde, Andrea Nathansen, Nele Noack, Hendrik Patzlaff, Felix Naumann, and Hazar Harmouch.
\newblock The effects of data quality on machine learning performance.
\newblock \emph{arXiv preprint arXiv:2207.14529}, 2022.

\bibitem[Hurtado~Bodell et~al.(2022)Hurtado~Bodell, Magnusson, and M{\"u}tzel]{hurtado2022documents}
Miriam Hurtado~Bodell, M{\aa}ns Magnusson, and Sophie M{\"u}tzel.
\newblock From documents to data: A framework for total corpus quality.
\newblock \emph{Socius}, 8:\penalty0 23780231221135523, 2022.

\bibitem[Bernardi et~al.(2023)Bernardi, Alves, Crepaldi, Yamada, Lima, and Rijo]{bernardi2023data}
Filipe~Andrade Bernardi, Domingos Alves, Nathalia Crepaldi, Diego~Bettiol Yamada, Vin{\'\i}cius~Costa Lima, and Rui Rijo.
\newblock Data quality in health research: integrative literature review.
\newblock \emph{Journal of medical Internet research}, 25:\penalty0 e41446, 2023.

\bibitem[Miller and Spiegel(2025)]{miller2025guidelines}
Gregor Miller and Elmar Spiegel.
\newblock Guidelines for research data integrity (grdi).
\newblock \emph{Scientific Data}, 12\penalty0 (1):\penalty0 95, 2025.

\bibitem[Wang et~al.(2024)Wang, Liu, Li, Lin, Sindakis, and Aggarwal]{wang2024overview}
Jingran Wang, Yi~Liu, Peigong Li, Zhenxing Lin, Stavros Sindakis, and Sakshi Aggarwal.
\newblock Overview of data quality: Examining the dimensions, antecedents, and impacts of data quality.
\newblock \emph{Journal of the Knowledge Economy}, 15\penalty0 (1):\penalty0 1159--1178, 2024.

\bibitem[Gunasekar et~al.(2023)Gunasekar, Zhang, Aneja, Mendes, Del~Giorno, Gopi, Javaheripi, Kauffmann, de~Rosa, Saarikivi, et~al.]{gunasekar2023textbooks}
Suriya Gunasekar, Yi~Zhang, Jyoti Aneja, Caio C{\'e}sar~Teodoro Mendes, Allie Del~Giorno, Sivakanth Gopi, Mojan Javaheripi, Piero Kauffmann, Gustavo de~Rosa, Olli Saarikivi, et~al.
\newblock Textbooks are all you need.
\newblock \emph{arXiv preprint arXiv:2306.11644}, 2023.

\bibitem[Olson(2003)]{olson2003data}
Jack~E Olson.
\newblock \emph{Data quality: the accuracy dimension}.
\newblock Elsevier, 2003.

\bibitem[Xiao et~al.(2017)Xiao, Rasul, and Vollgraf]{xiao2017fashion}
Han Xiao, Kashif Rasul, and Roland Vollgraf.
\newblock Fashion-mnist: a novel image dataset for benchmarking machine learning algorithms.
\newblock \emph{arXiv preprint arXiv:1708.07747}, 2017.

\bibitem[{Common Crawl}(2024)]{commoncrawl}
{Common Crawl}.
\newblock Common crawl dataset.
\newblock \url{https://commoncrawl.org/}, 2024.
\newblock Accessed on [2024-03-12].

\bibitem[{Wikipedia contributors}(2023)]{wiki2023index}
{Wikipedia contributors}.
\newblock Index of /enwiki/latest/, 2023.
\newblock URL \url{https://dumps.wikimedia.org/enwiki/latest/}.
\newblock [Online; accessed 19-Dec-2023].

\bibitem[Northcutt et~al.(2021)Northcutt, Athalye, and Mueller]{northcutt2021pervasive}
Curtis~G Northcutt, Anish Athalye, and Jonas Mueller.
\newblock Pervasive label errors in test sets destabilize machine learning benchmarks.
\newblock \emph{arXiv preprint arXiv:2103.14749}, 2021.

\bibitem[Gentzkow et~al.(2018)Gentzkow, Shapiro, and Taddy]{gentzkow2018congressional}
Matthew Gentzkow, Jesse~M Shapiro, and Matt Taddy.
\newblock Congressional record for the 43rd-114th congresses: Parsed speeches and phrase counts.
\newblock In \emph{URL: https://data.stanford.edu/congress{\_}text}, 2018.

\bibitem[Erjavec et~al.(2023)Erjavec, Ogrodniczuk, Osenova, Ljubevic, Simov, Pancur, Rudolf, Kopp, Barkarson, Steingrimsson, et~al.]{erjavec2023parlamint}
Tomaz Erjavec, Maciej Ogrodniczuk, Petya Osenova, Nikola Ljubevic, Kiril Simov, Andrej Pancur, Michal Rudolf, Matyas Kopp, Starkadhur Barkarson, Steinthorr Steingrimsson, et~al.
\newblock The parlamint corpora of parliamentary proceedings.
\newblock \emph{Language resources and evaluation}, 57\penalty0 (1):\penalty0 415--448, 2023.

\bibitem[{\"O}stling et~al.(2023){\"O}stling, Sargeant, Xie, Bull, Terenin, Jonsson, Magnusson, and Steffek]{ostling2023cambridge}
Andreas {\"O}stling, Holli Sargeant, Huiyuan Xie, Ludwig Bull, Alexander Terenin, Leif Jonsson, M{\aa}ns Magnusson, and Felix Steffek.
\newblock The cambridge law corpus: A dataset for legal ai research.
\newblock \emph{Advances in Neural Information Processing Systems}, 36:\penalty0 41355--41385, 2023.

\bibitem[Butler(2025)]{butler-2025-open-australian-legal-corpus}
Umar Butler.
\newblock Open australian legal corpus, 2025.
\newblock URL \url{https://huggingface.co/datasets/isaacus/open-australian-legal-corpus}.

\bibitem[Davies(2008)]{davies_coca}
Mark Davies.
\newblock The corpus of contemporary american english (coca), 2008.
\newblock URL \url{https://www.english-corpora.org/coca/}.
\newblock Accessed: 2025-04-14.

\bibitem[Kaddour et~al.(2023)Kaddour, Harris, Mozes, Bradley, Raileanu, and McHardy]{kaddour2023challenges}
Jean Kaddour, Joshua Harris, Maximilian Mozes, Herbie Bradley, Roberta Raileanu, and Robert McHardy.
\newblock Challenges and applications of large language models.
\newblock \emph{arXiv preprint arXiv:2307.10169}, 2023.

\bibitem[Voormann and Gut(2008)]{voormann2008agile}
Holger Voormann and Ulrike Gut.
\newblock Agile corpus creation.
\newblock 2008.

\bibitem[Khirbat(2017)]{khirbat2017ocr}
Gitansh Khirbat.
\newblock Ocr post-processing text correction using simulated annealing (opteca).
\newblock In \emph{Proceedings of the Australasian Language Technology Association Workshop 2017}, pages 119--123, 2017.

\bibitem[Zhang et~al.(2020)Zhang, Huang, Liu, and Li]{zhang2020spelling}
Shaohua Zhang, Haoran Huang, Jicong Liu, and Hang Li.
\newblock Spelling error correction with soft-masked bert.
\newblock \emph{arXiv preprint arXiv:2005.07421}, 2020.

\bibitem[Fante et~al.(2021)Fante, Nunzio, et~al.]{fante2021quantitative}
Del Fante, Di~Nunzio, et~al.
\newblock A quantitative/qualitative approach to $\{$OCR$\}$ error detection and correction in old newspapers for corpus-assisted discourse studies.
\newblock In \emph{CEUR WORKSHOP PROCEEDINGS}, volume 2816, pages 53--63, 2021.

\bibitem[Yun et~al.(2021)Yun, Oh, Heo, Han, Choe, and Chun]{yun2021re}
Sangdoo Yun, Seong~Joon Oh, Byeongho Heo, Dongyoon Han, Junsuk Choe, and Sanghyuk Chun.
\newblock Re-labeling imagenet: from single to multi-labels, from global to localized labels.
\newblock In \emph{Proceedings of the IEEE/CVF conference on computer vision and pattern recognition}, pages 2340--2350, 2021.

\bibitem[Dakka et~al.(2021)Dakka, Nguyen, Hall, Diakiw, VerMilyea, Linke, Perugini, and Perugini]{dakka2021automated}
MA~Dakka, TV~Nguyen, JMM Hall, SM~Diakiw, M~VerMilyea, R~Linke, M~Perugini, and D~Perugini.
\newblock Automated detection of poor-quality data: case studies in healthcare.
\newblock \emph{Scientific Reports}, 11\penalty0 (1):\penalty0 18005, 2021.

\bibitem[Barchard and Pace(2011)]{barchard2011preventing}
Kimberly~A Barchard and Larry~A Pace.
\newblock Preventing human error: The impact of data entry methods on data accuracy and statistical results.
\newblock \emph{Computers in Human Behavior}, 27\penalty0 (5):\penalty0 1834--1839, 2011.

\bibitem[Gao et~al.(2016)Gao, Lu, and Zhou]{pmlr-v48-gaoa16}
Chao Gao, Yu~Lu, and Dengyong Zhou.
\newblock Exact exponent in optimal rates for crowdsourcing.
\newblock In Maria~Florina Balcan and Kilian~Q. Weinberger, editors, \emph{Proceedings of The 33rd International Conference on Machine Learning}, volume~48 of \emph{Proceedings of Machine Learning Research}, pages 603--611, New York, New York, USA, 20--22 Jun 2016. PMLR.
\newblock URL \url{https://proceedings.mlr.press/v48/gaoa16.html}.

\bibitem[Br{\"u}hlmann et~al.(2020)Br{\"u}hlmann, Petralito, Aeschbach, and Opwis]{bruhlmann2020quality}
Florian Br{\"u}hlmann, Serge Petralito, Lena~F Aeschbach, and Klaus Opwis.
\newblock The quality of data collected online: An investigation of careless responding in a crowdsourced sample.
\newblock \emph{Methods in Psychology}, 2:\penalty0 100022, 2020.

\bibitem[Preston-Werner(2013)]{preston2013semantic}
Tom Preston-Werner.
\newblock Semantic versioning 2.0. 0.
\newblock \url{https://semver.org/}, 2013.

\bibitem[Erjavec and Pancur(2021)]{erjavec2021parla}
Tomas Erjavec and Andrej Pancur.
\newblock The parla-clarin recommendations for encoding corpora of parliamentary proceedings.
\newblock \emph{Journal of the Text Encoding Initiative}, \penalty0 (14), 2021.

\bibitem[Myers(1986)]{myers1986nd}
Eugene~W Myers.
\newblock An o (nd) difference algorithm and its variations.
\newblock \emph{Algorithmica}, 1\penalty0 (1-4):\penalty0 251--266, 1986.

\bibitem[Torvalds et~al.(2005)Torvalds, Hamano, et~al.]{git}
Linus Torvalds, Junio Hamano, et~al.
\newblock {G}it --- git-scm.com.
\newblock \url{https://git-scm.com/}, 2005.
\newblock [Accessed 20-Dec-2022].

\bibitem[Smith and Wilkinson(1969)]{smith1969branching}
Walter~L Smith and William~E Wilkinson.
\newblock On branching processes in random environments.
\newblock \emph{The Annals of Mathematical Statistics}, pages 814--827, 1969.

\bibitem[Guivarc'h and Liu(2001)]{guivarc2001proprietes}
Yves Guivarc'h and Quansheng Liu.
\newblock Propri{\'e}t{\'e}s asymptotiques des processus de branchement en environnement al{\'e}atoire.
\newblock \emph{Comptes Rendus de l'Acad{\'e}mie des Sciences-Series I-Mathematics}, 332\penalty0 (4):\penalty0 339--344, 2001.

\bibitem[Yrj{\"a}n{\"a}inen et~al.(2024)Yrj{\"a}n{\"a}inen, Nor{\'e}n, Borges, Jarlbrink, Brorsson, Olsson, Snickars, and Magnusson]{yrjanainen2024swedish}
V{\"a}in{\"o} Yrj{\"a}n{\"a}inen, Fredrik~Mohammadi Nor{\'e}n, Robert Borges, Johan Jarlbrink, Lotta~{\AA}berg Brorsson, Anders~P Olsson, Pelle Snickars, and M{\aa}ns Magnusson.
\newblock The swedish parliament corpus 1867--2022.
\newblock In \emph{Proceedings of the 2024 joint international conference on computational linguistics, language resources and evaluation (LREC-COLING 2024)}, pages 16100--16112, 2024.

\bibitem[Malmsten et~al.(2020)Malmsten, B{\"o}rjeson, and Haffenden]{malmsten2020playing}
Martin Malmsten, Love B{\"o}rjeson, and Chris Haffenden.
\newblock Playing with words at the national library of sweden--making a swedish bert.
\newblock \emph{arXiv preprint arXiv:2007.01658}, 2020.

\bibitem[{\"O}hman et~al.(2023){\"O}hman, Verlinden, Ekgren, Gyllensten, Isbister, Gogoulou, Carlsson, and Sahlgren]{ohman2023nordic}
Joey {\"O}hman, Severine Verlinden, Ariel Ekgren, Amaru~Cuba Gyllensten, Tim Isbister, Evangelia Gogoulou, Fredrik Carlsson, and Magnus Sahlgren.
\newblock The nordic pile: A 1.2 tb nordic dataset for language modeling.
\newblock \emph{arXiv preprint arXiv:2303.17183}, 2023.

\end{thebibliography}

\section*{Checklist}

\begin{enumerate}

  \item For all models and algorithms presented, check if you include:
  \begin{enumerate}
    \item A clear description of the mathematical setting, assumptions, algorithm, and/or model. [Not Applicable]
    \item An analysis of the properties and complexity (time, space, sample size) of any algorithm. [Not Applicable]
    \item (Optional) Anonymized source code, with specification of all dependencies, including external libraries. [Not Applicable]
  \end{enumerate}

  \item For any theoretical claim, check if you include:
  \begin{enumerate}
    \item Statements of the full set of assumptions of all theoretical results. [Yes]
    \item Complete proofs of all theoretical results. [Yes]
    \item Clear explanations of any assumptions. [Yes]     
  \end{enumerate}

  \item For all figures and tables that present empirical results, check if you include:
  \begin{enumerate}
    \item The code, data, and instructions needed to reproduce the main experimental results (either in the supplemental material or as a URL). [Yes]
    \item All the training details (e.g., data splits, hyperparameters, how they were chosen). [Not Applicable]
    \item A clear definition of the specific measure or statistics and error bars (e.g., with respect to the random seed after running experiments multiple times). [No]
    \item A description of the computing infrastructure used. (e.g., type of GPUs, internal cluster, or cloud provider). [No]
  \end{enumerate}

  \item If you are using existing assets (e.g., code, data, models) or curating/releasing new assets, check if you include:
  \begin{enumerate}
    \item Citations of the creator If your work uses existing assets. [Yes]
    \item The license information of the assets, if applicable. [No]
    \item New assets either in the supplemental material or as a URL, if applicable. [No]
    \item Information about consent from data providers/curators. [Not Applicable]
    \item Discussion of sensible content if applicable, e.g., personally identifiable information or offensive content. [Not Applicable]
  \end{enumerate}

  \item If you used crowdsourcing or conducted research with human subjects, check if you include:
  \begin{enumerate}
    \item The full text of instructions given to participants and screenshots. [Not Applicable]
    \item Descriptions of potential participant risks, with links to Institutional Review Board (IRB) approvals if applicable. [Not Applicable]
    \item The estimated hourly wage paid to participants and the total amount spent on participant compensation. [Not Applicable]
  \end{enumerate}

\end{enumerate}
%%%%%%%%%%%%%%%%%%%%%%%%%%%%%%%%%%%%%%%%%%%%%%%%%%%%%%%%%%%%%%%%%%%%%%%%%%%%%%%
%%%%%%%%%%%%%%%%%%%%%%%%%%%%%%%%%%%%%%%%%%%%%%%%%%%%%%%%%%%%%%%%%%%%%%%%%%%%%%%
% APPENDIX
%%%%%%%%%%%%%%%%%%%%%%%%%%%%%%%%%%%%%%%%%%%%%%%%%%%%%%%%%%%%%%%%%%%%%%%%%%%%%%%
%%%%%%%%%%%%%%%%%%%%%%%%%%%%%%%%%%%%%%%%%%%%%%%%%%%%%%%%%%%%%%%%%%%%%%%%%%%%%%%
\clearpage
\appendix
\thispagestyle{empty}
\onecolumn

%\onecolumn
\aistatstitle{Iterative Data Curation with Theoretical Guarantees: \\
Supplementary Materials}

\section{Additional Experiments and Figures} \label{appendix:figures}

\begin{comment}
\begin{figure}[htb]
    \centering
    \includegraphics[width=8cm]{img/graph-beta-3-5.pdf}
    \caption{Number of errors in the data by iteration when $r \sim Beta(3,5)$.  Decision rule in orange, no decision rule in green. Decision rule only accepts when 25 or more out of the sample of 50 edits are correct. }
    \label{fig:graph-beta-3-5}
\end{figure}
\end{comment}

\begin{figure}[h!]
    \centering
    \begin{subfigure}[b]{0.48\textwidth}
    \includegraphics[width=\linewidth]{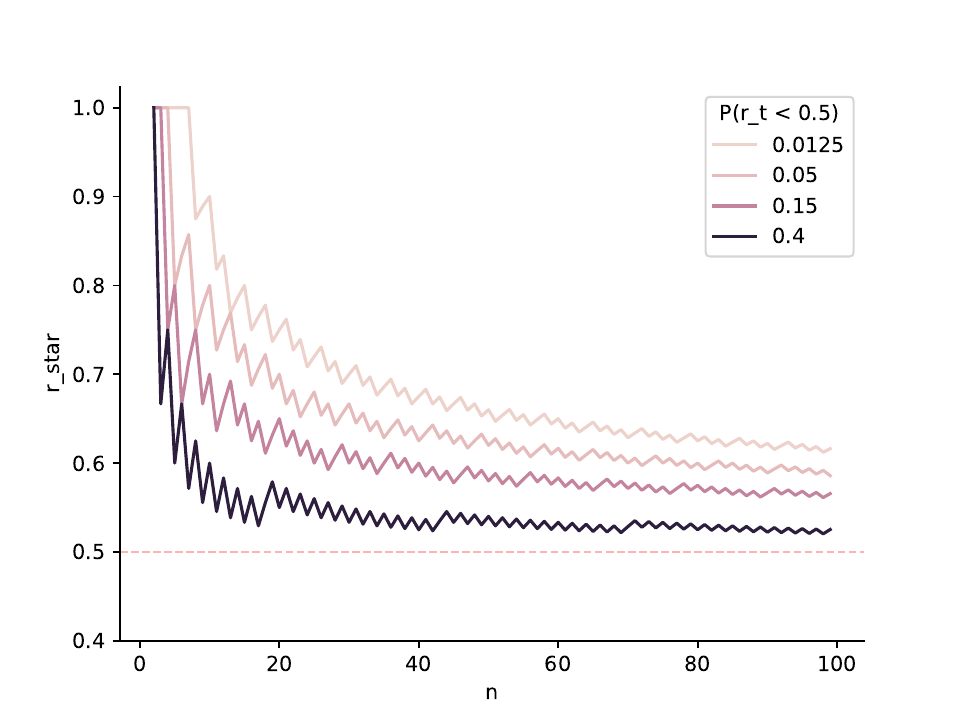}
    %\caption{$r \sim Beta(1,1)$}
    \end{subfigure}%  % Critical comment character
    \hfill
    \begin{subfigure}[b]{0.48\textwidth}
    \includegraphics[width=\linewidth]{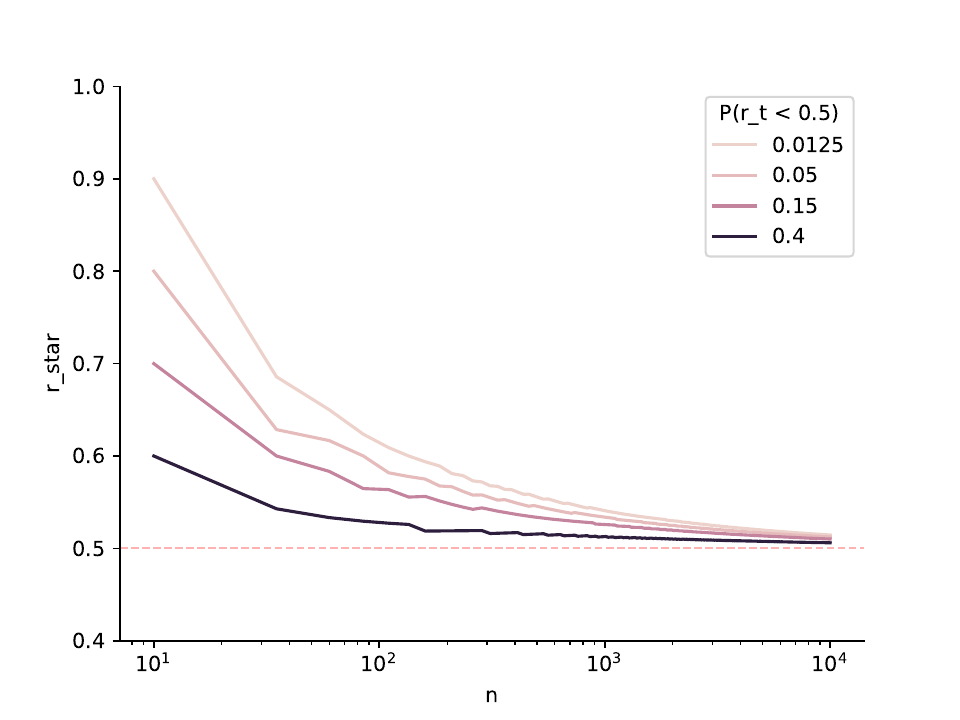}
    %\caption{$r \sim Beta(1,2)$.}
    \end{subfigure}
    \caption{Convergence thresholds for different priors for $r$. The lines denote the smallest $m$ that guarantees convergence as a ratio of the sample size $m/n$. On the left, values $n \in \{2, 3, \dots, 99, 100\}$; on the right logarithmically spaced $n \in \{10^1, \dots, 10^4\}$. The red dashed line denotes the asymptote $m = n/2$.}
    \label{fig:graph-beta-1-1}
\end{figure}

\begin{figure}[h!]
    \centering
    \begin{subfigure}[b]{0.48\textwidth}
    \includegraphics[width=\linewidth]{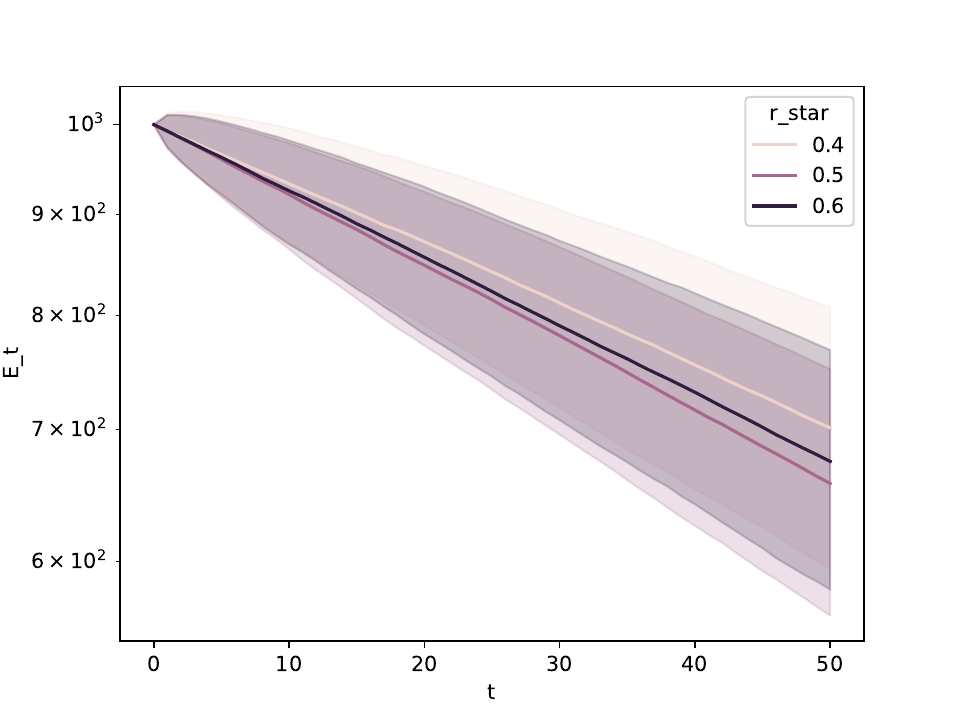}
    \caption{$r \sim Beta(2,1)$}
    \end{subfigure}%  % Critical comment character
    \hfill
    \begin{subfigure}[b]{0.48\textwidth}
    \includegraphics[width=\linewidth]{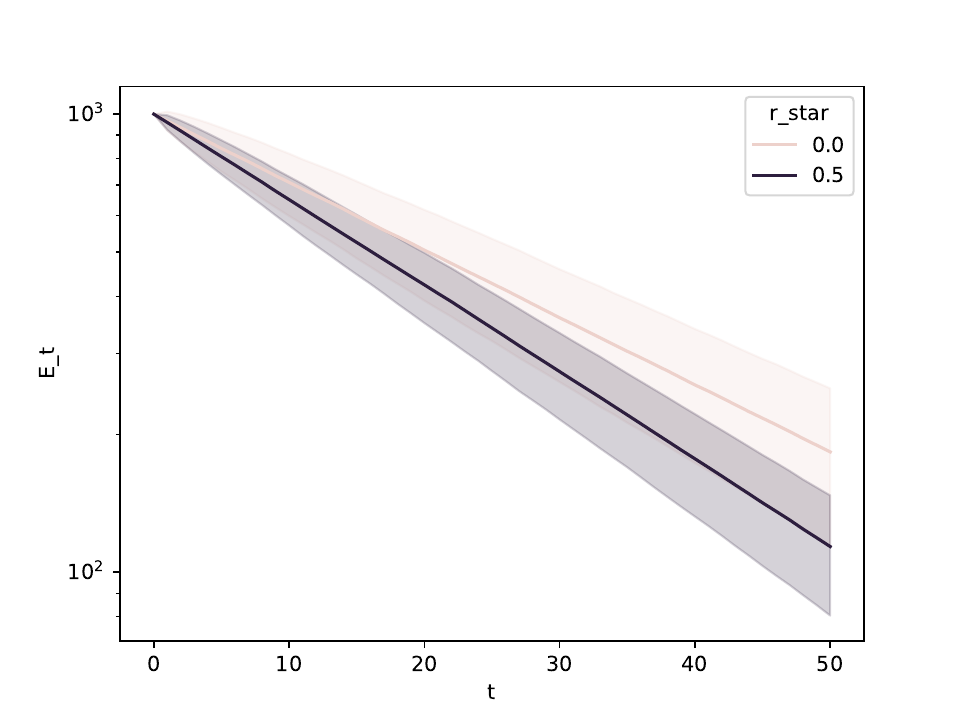}
    \caption{$r \sim Beta(1,2)$.}
    \end{subfigure}
    \caption{Number of errors in the data by iteration, $n=50$. $r_t \sim Beta(1,2)$. The y-axis is logarithmic to show the exponential decay of the errors.}
    %\label{fig:theoretical_simulation_goodproposals}
\end{figure}

\begin{figure}[h!]
    \centering
    \begin{subfigure}[b]{0.48\textwidth}
    \includegraphics[width=\linewidth]{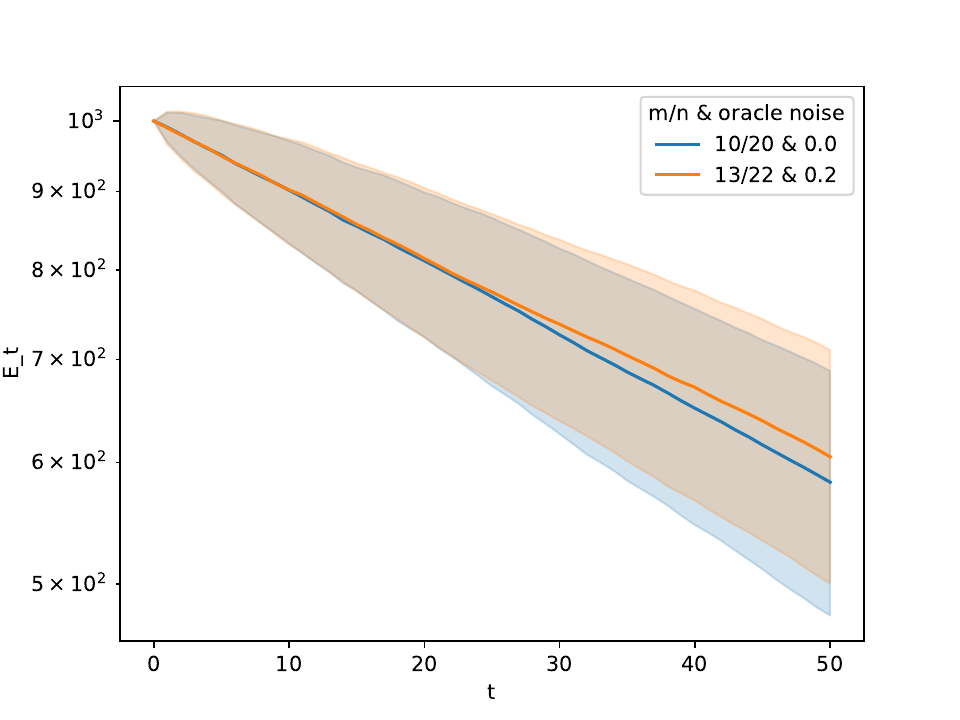}
    \caption{$r \sim Beta(1.5,0.8)$}
    \end{subfigure}%  % Critical comment character
        \hfill
    \begin{subfigure}[b]{0.48\textwidth}
    \includegraphics[width=\linewidth]{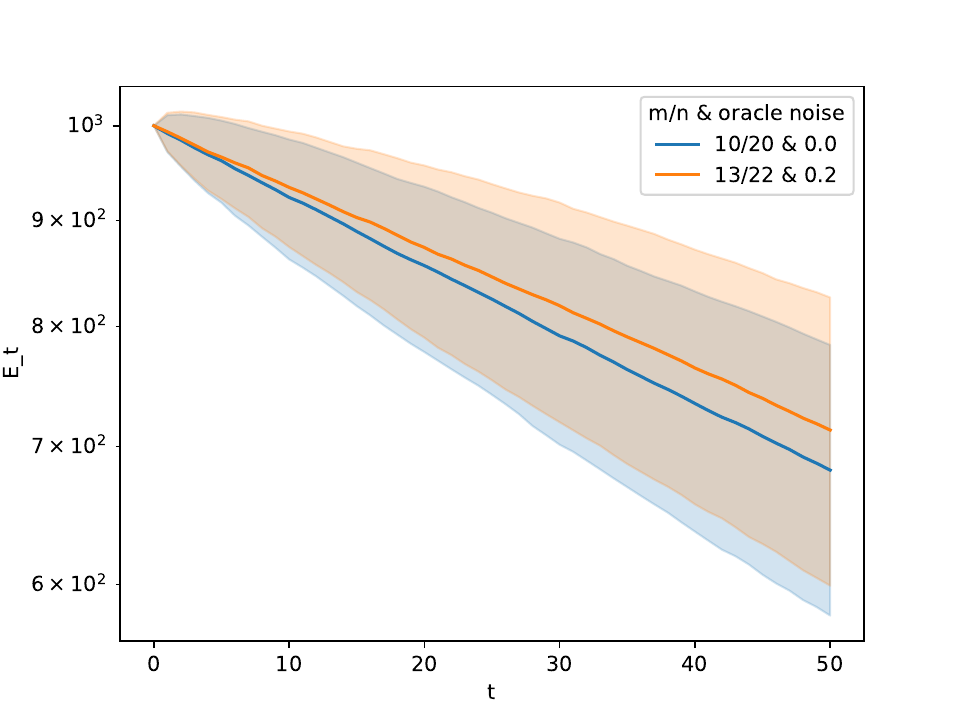}
    \caption{$r \sim Beta(1,2)$.}
    \end{subfigure}

    \caption{Number of errors in the data by iteration, $n=50$. Oracle and noisy oracle with noise rate $0.2$. $n$ and $m$ are set to correspond to the values of the theorem. The y-axis is logarithmic to show the exponential decay of the errors.}
    \label{fig:noisy_oracle}
\end{figure}

In Figure \ref{fig:noisy_oracle}, results from a simulation with and without a noisy oracle are presented. By theorem \ref{theorem:noisy_oracle}, $n$ and $m$ for the noisy oracle are set so that they correspond to the true oracle. Specifically, $n = 20 = n_{noisy} - \frac{\varepsilon}{1-\varepsilon} m_{noisy}$ and $m=10 = m_{noisy}/(1-\varepsilon)$. Since non-integer values are obtained but cannot be used , $n$ is rounded down, while $m$ is rounded up.

The figures show that $E_t$ under a noisy oracle  follows a rather similar distribution as $E_t$ under the true oracle.

\section{Proofs and detailed derivations} \label{appendix:detailed_proof}

\subsection{Derivations for Theorem \ref{theorem:zeroerrors}} 

The expectation of the number of descendants for one branch is
\begin{align}
    \mathbb{E}[\zeta \mid M = m]  &= \int_0^1 \int_0^1 \sum_{k=0}^2 k \cdot p_{k,t} d r_t d \lambda_t \\
    &=\int_0^1 \int_0^1 \left(1 - \lambda + 2(1-r_t)\lambda\right) p(r_t \mid M = m) d r_t d \lambda_t \\
    &=\int_0^1 \int_0^1 \left(1 - \lambda + 2 \lambda  -2 r_t \lambda\right) p(r_t \mid M = m) d r_t d \lambda_t\\
    &=\int_0^1 \int_0^1 \left(1 + \lambda  - 2 r_t \lambda\right) p(r_t \mid M = m) d r_t d \lambda_t\\
    &=1 + \int_0^1 \lambda -2 \lambda \int_0^1 r_t p(r_t \mid M = m) d r_t d \lambda_t\\
    &=1 + \mathbb{E} \Big[\lambda -2 \lambda \mathbb{E}[r_t \mid M = m]\Big] \\
    &=1 + 2 \mathbb E [\lambda] \left( 1/2  - \mathbb{E}[r_t \mid M = m] \right)
\end{align}
and subsequently
\begin{align}
    &\quad 1/2 - \mathbb{E}[r \mid M = m] \\
    &= \mathbb{E}[1/2 - r \mid M = m] \\
    &= \int_0^1 (1/2 - r) p(r \mid m) d r  \\
    & \propto \int_0^{0.5} (1/2 - r) p(m \mid r) p(r) d r + \int_{0.5}^1 (1/2 - r)p(m \mid r)  p(r) d r \\ % & \mid \text{Bayes rule \& split} \int \\
    %&\propto \int_0^{0.5} (1/2 - r)(1-r)^mr^{n-m} p(r) d r + 
    %\int_{0.5}^1 {n \choose m} (1/2 - r)(1-r)^mr^{n-m} p(r) d r \\
    &\geq \int_0^{0.5} (1/2 - r) p(m \mid r) p(r) d r - \max_r {n \choose m} \left[(r - 1/2)(1-r)^mr^{n-m}\right] \int_{0.5}^1  p(r) d r  \\ %&\mid \int f(x) \leq \int \max f(x) \nonumber\\
    &\propto \int_0^{0.5} (1/2 - r) p(m \mid r) p(r) d r - \max_r {n \choose m} \left[(r - 1/2)(1-r)^mr^{n-m}\right] P(r \geq 0.5) \\
    &\geq a \int_0^{0.5} (1/2 - r) p(m \mid r) d r - \max_r {n \choose m} \left[(r - 1/2)(1-r)^mr^{n-m}\right] P(r \geq 0.5) \nonumber \\%& \mid \text{Assumption \ref{assumption:revacc}}\\ 
    &\geq a \int_0^{1} (1/2 - r) p(m \mid r) d r - \max_r {n \choose m} \left[(r - 1/2)(1-r)^mr^{n-m}\right] P(r \geq 0.5) \\ %& \mid (1/2-r) < 0\\
    &= a /2 \int_0^{1} p(m \mid r) d r - a \int_0^{1} r p(m \mid r) d r \\
    &- \max_r {n \choose m} \left[(r - 1/2)(1-r)^mr^{n-m}\right] P(r \geq 0.5) \\
    &= a /2 \int_0^{1}{n \choose m} (1-r)^m r^{n-m} d r - a \int_0^{1} {n \choose m} (1-r)^m r^{n-m+1} d r \\
    &- \max_r {n \choose m} \left[(r - 1/2)(1-r)^mr^{n-m}\right] P(r \geq 0.5)\nonumber \\
    &= a {n \choose m} \left(1/2 B(n - m + 1, m + 1) - B(n - m + 2, m + 1) \right) \\%& \mid B(x,y) \text{ definition} \\
    &- \max_r {n \choose m} \left[(r - 1/2)(1-r)^mr^{n-m}\right] P(r \geq 0.5) \nonumber\\
    &= a {n \choose m} \left(1/2 \frac{2m - n}{n+2} B(n - m + 1, m+1)\right) - \max_r {n \choose m} \left[(r - 1/2)(1-r)^mr^{n-m}\right] P(r \geq 0.5)  \\ %& \mid B(x,y) \text{ prop'ty} \\
    &= a \left(1/2 \frac{2m - n}{n+2} \frac{1}{n+1}\right) - \max_r {n \choose m} \left[(r - 1/2)(1-r)^mr^{n-m}\right] P(r \geq 0.5)  \\
    &= a \left(\frac{2m - n}{2 (n+1) (n+2)}\right) - \max_r {n \choose m} \left[(r - 1/2)(1-r)^mr^{n-m}\right] P(r \geq 0.5) \\
    &\geq a \left(\frac{2m - n}{2 (n+1) (n+2)}\right) - \max_r {n \choose m}\left[(r - 1/2)(1-r)^mr^{n-m}\right] (1-  a/2)  \\%& \mid P(r < 0.5) \geq a/2 \\
    &\propto a- \Bigg[\underbrace{\left(\frac{2 (n+1) (n+2)}{2m - n}\right) {n \choose m}\cdot   \quad\max_r \left[(r - 1/2)(1-r)^mr^{n-m}\right] }_{C(n, m)} \Bigg](1-  a/2) \\
    &= a - C(n, m) (1-  a/2) \\
    &\propto a(2 + C(n, m)) - C(n, m)\\
    &\propto a - C(n, m)/(2 + C(n, m))
\end{align}

\subsection{Derivations and properties of $C(n,m)$ for Theorem \ref{theorem:zeroerrors}}
\label{appendix:derivations}
To maximize ${n \choose m} (r-1/2)(1-r)^m r^{n-m}$ in the interval $(1/2, 1)$, we can maximize its logarithm since it's a product of positive factors
\begin{align}
    \log \Big[{n \choose m}(r-1/2)(1-r)^m r^{n-m} \Big] &=\log{n \choose m}+ \log(r-1/2) + m \log(1-r) + (n-m) \log r
\end{align}
This expression has the derivative
\begin{equation}
\begin{aligned}
    \frac{d}{dr}\log[{n \choose m}(r-1/2)(1-r)^m r^{n-m}] = \frac{1}{r -1/2} - \frac{m}{1-r} + \frac{n-m}{r}
\end{aligned}
\end{equation}
Its roots are the same when scaled with the expression $(r-1/2)(1-r)r$, which is positive in the range $(1/2, 1)$
\begin{align}
    &\frac{d}{dr}\log[(r-1/2)(1-r)^m r^{n-m}] = 0 \nonumber\\
    \iff &(r-1/2)(1-r)r\left(\frac{1}{r-1/2} - \frac{m}{1-r} + \frac{n-m}{r}\right) = 0\nonumber \\% & \mid 1/2 < r < 1\\
    \iff & 2 m r - m + 2 n r^2 - 3 n r + n + 2 r^2 - 2 r = 0
\end{align}
This polynomial 
has the roots
\begin{align*}
    r = \frac{3n -2m + 2 \pm \sqrt{{(2m-n)^2 + 4(n + 1)}}}{4(n + 1)}
\end{align*}
If $n, m\geq1$ and $2n \geq n$, the larger root is larger than $1/2$ since
\begin{align}
     \frac{3n - 2m  +2 + \sqrt{(n-2m)^2 + 4(n + 1)}}{4(n + 1)} 
     &> \frac{3n - 2m  +2 + \sqrt{(n-2m)^2}}{4(n + 1)} \\
     &= \frac{3n - 2m  +2 - (n-2m)}{4(n + 1)} \quad \mid 2m \geq n \\
     &= \frac{2n  + 2}{4(n + 1)} = \frac{2(n +1)}{4(n + 1)} = \frac{1}{2} 
\end{align}
and smaller than $1$ since
\begin{align}
     \frac{3n - 2m  +2 + \sqrt{(n-2m)^2 + 4(n + 1)}}{4(n + 1)} &< \frac{3n - 2m  +2 + \sqrt{(n-2m)^2} +  2\sqrt{n+1}}{4(n + 1)} \\%& \mid \lVert x\rVert_1 \geq \lVert x\rVert_2 \\
     &= \frac{3n - 2m  +2 - (n-2m) + 2\sqrt{n+1}}{4(n + 1)} \\% & \mid 2m \geq n \\
     &= \frac{2n  + 2 + 2\sqrt{n+1}}{4(n + 1)} \\
     &= \frac{1}{2} + \frac{1}{2\sqrt{n+1}} \leq 1/2 + 1/2
\end{align}
since $\lVert x\rVert_1 \geq \lVert x\rVert_2$ and $2m \geq n$.

Similar logic can be used to show that the smaller root is always smaller than $1/2$. Thus, the derivative has only one root in the range $(1/2, 1)$, namely
\begin{equation}
       r = \frac{3n -2m + 2 + \sqrt{{(n-2m)^2 + 4(n + 1)}}}{4n + 4}
\end{equation}
As the expression $(r-1/2)(1-r)^{m}r^{n-m}$ is zero at the edges of the boundaries, and positive within the interval, its maximum is found at this root of the derivative. The maximum is
\begin{align}
\max_r [(r-1/2)(1-r)^{m}r^{n-m}] &= (\frac{3n -2m + 2 + \sqrt{{(n-2m)^2 + 4(n + 1)}}}{4n + 4} - 1/2) \nonumber\\
&\cdot \left(1 - \frac{3n -2m + 2 + \sqrt{{(n-2m)^2 + 4(n + 1)}}}{4n + 4} \right)^m \nonumber\\
&\cdot \left(\frac{3n -2m + 2 + \sqrt{{(n-2m)^2 + 4(n + 1)}}}{4n + 4}\right)^{n-m} \\
 = &\left(\frac{n -4m + 2 + \sqrt{{(n-2m)^2 + 4(n + 1)}}}{4n + 4} \right) \nonumber\\
\cdot &\left(\frac{n -2m + 2 - \sqrt{{(n-2m)^2 + 4(n + 1)}}}{4n + 4} \right)^m \nonumber\\
\cdot &\left(\frac{3n -2m + 2 + \sqrt{{(n-2m)^2 + 4(n + 1)}}}{4n + 4} \right)^{n-m} 
\end{align}
%\vy{These need to be multiplied by ${n \choose m}$ The statements should still hold, but $O(2^{-1} n^d)$ or something.} 
Further properties can be shown for the maximum $C(n, m)$. For instance, setting $m = n - d$ where $d \in \mathbb N$ and kept constant, $C(n, m)$ can be made arbitrarily small
\begin{equation}
    \begin{aligned}
        C(n, n - d) &= \frac{2(n+1)(n+2)}{2m -n} {n \choose n-d} \max_r [(r-1/2)(1-r)^{m}r^{n-m}] \\
        &< 2(n+1)(n+2) {n \choose n-d} \max_r [(r-1/2)(1-r)^{m}r^{n-m}] \\
        &< 2(n+2)^2 {n \choose n-d} \max_r [(r-1/2)(1-r)^{m}r^{n-m}] \\
        &< 2(n+2)^{2 + d} \max_r [(r-1/2)(1-r)^{m}r^{n-m}] \\
        &= 2(n+2)^{2 + d}\max_r [(r-1/2)(1-r)^{n-d}r^{d}] \\
        &\leq 2(n+2)^{2 + d}\max_r [(r-1/2)(1-r)^{n-d}] & \mid \text{ } r \leq 1 \\
        &\leq 2(n+2)^{2 + d} 1/2(1-1/2)^{n-d}  & \mid \text{ } 1/2 \leq r \leq 1 \\
        &\leq 2(n+2)^{2 + d} 1/2^{n-d+1} = O(2n^{2 + d} 2^{-n}) \to 0
    \end{aligned}
\end{equation}
i.e. $C(n,n-d)$ decays exponentially with growing $n$. Note that by the same steps, the same holds for even a ratio $m = q n$, where $q>0.5$, though with the exponent changing depending on the ratio
\begin{equation}
    \begin{aligned}
        C(n, q n) &= \frac{2(n+1)(n+2)}{2m -n} {n \choose qn} \max_r [(r-1/2)(1-r)^{m}r^{ n-m}] \\
        &< 2(n+2)^2 {n \choose qn} \max_r [(r-1/2)(1-r)^{m}r^{ n-m}] \\
        &\propto (n+2)^2 \left(\frac{1}{q^q (1-q)^{1-q}}\right)^n \max_r [(r-1/2)(1-r)^{m}r^{ n-m}] \\%& \mid \text{Stirling}\\
        &< (n+2)^2 (q^\prime)^{-n} \max_r [(r-1/2)(1-r)^{m}r^{ n-m}]\\% & \mid \exists q^\prime < 1/2   \\
        &= (n+2)^2 (q^\prime)^{-n} \max_r [(r-1/2)(1-r)^{qn}r^{(1-q)n}] \\
        &\leq (n+2)^2 (q^\prime)^{-n} 1/2^{qn} r_{max}^{(1-q)n} \\%& \mid \text{ } 1/2 \leq r \leq 1 \\
        &\leq (n+2)^2 (q^\prime)^{-n} 1/2^{qn} {q^\prime}^{(1-q)n} \\%& \mid \text{ } r_{max} <\frac{1}{2} + \frac{1}{2\sqrt{n+1}} \leq q^\prime \\
        &= (n+2)^2 (q^\prime)^{-qn} 1/2^{qn} \\
        &= O\left(n^2 (2 q^\prime)^{-n} \right) \to 0 \\%& \mid 2 q^\prime > 1
    \end{aligned}
\end{equation}
since $\forall q \quad \exists q^\prime < 1/2$ for which $\frac{1}{q^q (1-q)^{1-q}} < 1/q^\prime$.

\subsection{Derivations for Theorem \ref{theorem:tabulardata}} \label{appendix:tabular}

%\vy{TODO}
The additivity assumption in Theorem \ref{theorem:tabulardata}, which applies for $K$-dimensional vectors, can be directly shown
\begin{align*}
&d(S_t + u_1 + u_2, S^\star) - d(S_t, S^\star)\\
&= - \sum_{j=1}^K \mathbb I ( S_{t, j} + u_{1,j} + u_{2,j} = S^\star) - \mathbb I ( S_{t, j} = S^\star) \\
=& -\sum_{j=1}^K \mathbb I (u_{1,j} \neq 0 \land u_{2,j} = 0)\left( \mathbb I ( S_{t, j} + u_{1,j} = S^\star) - \mathbb I ( S_{t, j} = S^\star) \right)\\
&-\sum_{j=1}^K \mathbb I (u_{1,j} = 0 \land u_{2,j} \neq 0)\left( \mathbb I ( S_{t, j} + u_{2,j} = S^\star) - \mathbb I ( S_{t, j} = S^\star) \right)\\
&-\sum_{j=1}^K \underbrace{\mathbb I (u_{1,j} \neq 0 \land u_{2,j} \neq 0)}_{\text{always } 0}\left( \mathbb I ( S_{t, j}  u_{1,j} + u_{2,j} = S^\star) - \mathbb I ( S_{t, j} = S^\star) \right)\\
= &-\sum_{j=1}^K \mathbb I (u_{1,j} \neq 0 \land u_{2,j} = 0)\left( \mathbb I ( S_{t, j} + u_{1,j} = S^\star) - \mathbb I ( S_{t, j} = S^\star) \right)\\
&-\sum_{j=1}^K \mathbb I (u_{1,j} = 0 \land u_{2,j} \neq 0)\left( \mathbb I ( S_{t, j} + u_{2,j} = S^\star) - \mathbb I ( S_{t, j} = S^\star) \right)\\
=& d(S_t + u_1, S^\star) - d(S_t, S^\star) +d(S_t + u_2, S^\star) - d(S_t, S^\star)
\end{align*}
which holds for any data state $S_t$, true data $S^\star$, and pair of edits $(u_1, u_2)$.

\subsection{Properties of the longest common subsequence} \label{appendix:lcs}

We are unaware of this proof being found in the literature. However, as it is rather straightforward it is likely that we are not the first to prove it.

\begin{theorem}
If $S$ and $S^\prime$ are two sequences without duplicate elements, and for each two elements $(x,y)$ that are present in both $S$ and $S^\prime$, the relative order of such elements is the same in both sequences, then their longest common subsequence is unique.
\end{theorem}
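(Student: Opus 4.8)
The plan is to avoid reasoning about the LCS abstractly and instead \emph{construct} it explicitly: under the stated hypotheses, restricting each of $S$ and $S'$ to their shared alphabet produces literally the same sequence, and this common sequence is forced to be the unique longest common subsequence. So the first step is to let $C = \{x : x \in S \text{ and } x \in S'\}$ be the set of elements occurring in both sequences, and to write $S|_C$ and $S'|_C$ for the subsequences obtained by deleting from $S$ and from $S'$, respectively, every element not in $C$. Because neither sequence contains duplicates, each element of $C$ appears exactly once in $S|_C$ and exactly once in $S'|_C$, so both $S|_C$ and $S'|_C$ are orderings (permutations) of the same finite set $C$.

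The next step is to argue that $S|_C = S'|_C$. A linear ordering of a finite set is completely determined by the relative order of each pair of its elements. By hypothesis, for every pair $(x,y)$ of elements lying in both $S$ and $S'$ the relative order agrees in $S$ and $S'$, hence it agrees in $S|_C$ and $S'|_C$ as well. Two orderings of the same finite set that coincide on every pair are identical, so $S|_C = S'|_C$, and I would name this common sequence $\sigma$.

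It then remains to show $\sigma$ is the unique LCS. On one hand $\sigma$ is a subsequence of $S$ (being $S|_C$) and of $S'$ (being $S'|_C$), so it is a common subsequence. On the other hand, let $T$ be any common subsequence of $S$ and $S'$. Since $T$ is a subsequence of both, and both have distinct entries, every element of $T$ lies in $C$; moreover the order in which $T$ lists its elements must coincide with their order in $S$ restricted to $C$, which is exactly $\sigma$. Hence $T$ is a subsequence of $\sigma$, so $|T| \le |\sigma|$. This makes $\sigma$ a longest common subsequence, and any LCS $T$ then satisfies $|T| = |\sigma|$ while being a subsequence of $\sigma$, forcing $T = \sigma$; uniqueness follows.

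The main obstacle here is not a deep one: the entire argument rests on the two hypotheses collapsing $S|_C$ and $S'|_C$ into a single well-defined total order $\sigma$. The one point that genuinely needs care is the claim in the last step that a common subsequence can use only elements of $C$ \emph{and} must inherit the order $\sigma$ from both parents at once. This is precisely where the no-duplicates assumption does the work, since distinctness guarantees there is no ambiguity about where an element of $T$ sits within $S$ or within $S'$, and the consistent-order hypothesis guarantees these two placements impose the same constraint.
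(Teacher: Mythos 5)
Your proof is correct and follows essentially the same route as the paper's: both construct the candidate LCS by restricting $S$ to the common element set $C$, use the pairwise-order hypothesis to show it is also a subsequence of $S'$, and derive uniqueness from the fact that any common subsequence uses only elements of $C$ in that forced order. Your phrasing of the last step (every common subsequence is a subsequence of $\sigma$) is a slightly tidier packaging of the paper's length-counting argument, but the ideas are identical.
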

    
\begin{proof}
% I let Gemini polish this a bit
Let $C$ be the set of elements common to both  $S$ and $S^\prime$. Let $\lvert C \rvert=k$. Any common subsequence of  $S$ and $S^\prime$ can only be formed using elements from $C$. Therefore, the length of the longest common subsequence (LCS) is at most $k$.

Consider the subsequence formed by taking all the elements of $C$ in the order they appear in $S$. Let this subsequence be $L_S$. Since for each pair of elements $(x, y) \in C \times C, x \neq y$, their relative order is the same in $S^\prime$ as it is in $S$, $L_S$ is also a subsequence of $S^\prime$. Thus, $L_S$ is a common subsequence of length $k$. Therefore, the length of the LCS is at least $k$.

Combining these two points, the length of the LCS is exactly $k$, the number of common elements.

Now, we need to show that this LCS is unique. Suppose there exists another common subsequence $L^\prime$ of length $k$ that is different from $L_S$. Since $L^\prime$ has length $k$, it must contain all the elements of $C$. However, because the relative order of the elements of $C$ is the same in both $S$ and $S^\prime$, any subsequence of length $k$ formed by elements of $C$ must have the elements in the same relative order. Therefore, $L^\prime$ must be the same as $L_S$.

Thus, the longest common subsequence is unique.
\end{proof}

\subsection{Stochastic dominance of the alternative posterior for the noisy oracle}

\begin{lemma} \label{lemma:alt_posterior}
The posterior proportional to $f(r \mid M=m) \propto  ((1-\varepsilon)(1-r) + \varepsilon)^m r^{n-m} p(r)$ is stochastically dominated by the posterior proportional to $g(r) = (1-r)^{(1 - \varepsilon)m} r^{n-m} p(r)$.
\end{lemma}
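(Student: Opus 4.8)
The plan is to prove the ordering via the monotone likelihood ratio (MLR) property, which is the standard route from a pointwise density comparison to first-order stochastic dominance. Writing the two unnormalized posteriors as $f(r) \propto ((1-\varepsilon)(1-r)+\varepsilon)^m r^{n-m} p(r)$ and $g(r) \propto (1-r)^{(1-\varepsilon)m} r^{n-m} p(r)$, I would first form their ratio and observe that the common factor $r^{n-m} p(r)$ cancels, so that the prior and the $r^{n-m}$ term play no role in the comparison:
\[
\frac{f(r)}{g(r)} \;\propto\; \left(\frac{(1-\varepsilon)(1-r)+\varepsilon}{(1-r)^{1-\varepsilon}}\right)^{m}.
\]
This reduces the entire question to the monotonicity of the single-factor function $\phi(r) = \big((1-\varepsilon)(1-r)+\varepsilon\big)(1-r)^{-(1-\varepsilon)}$ on $(0,1)$.

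Next I would establish the monotonicity of $\phi$ by differentiating its logarithm. Setting $a = 1-r$, one gets $\frac{d}{da}\log\phi = (1-\varepsilon)\big(\tfrac{1}{(1-\varepsilon)a+\varepsilon} - \tfrac{1}{a}\big)$, whose sign is controlled by the elementary inequality $(1-\varepsilon)a + \varepsilon \ge a$, i.e. $\varepsilon(1-a) = \varepsilon r \ge 0$. Hence $\log\phi$ is monotone in $a$, equivalently $\phi^m$ (and therefore $f/g$) is monotone in $r$ with a definite sign. The same conclusion can be reached without calculus from weighted AM--GM, which gives $(1-\varepsilon)(1-r)+\varepsilon \ge (1-r)^{1-\varepsilon}$ and thus $\phi \ge 1$ throughout, pinning down which of $f,g$ carries more mass at the large-$r$ end.

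With the likelihood ratio shown to be monotone, I would invoke the standard implication MLR $\Rightarrow$ first-order stochastic dominance: if $f/g$ is monotone in $r$ then one posterior's CDF lies uniformly above the other's, which gives the stochastic-dominance ordering between the two posteriors in the direction dictated by the sign of $\log\phi$, and in particular orders the two posterior means $\mathbb{E}[r]$ as needed in the proof of Theorem \ref{theorem:noisy_oracle}. Two small points need care: the normalizing constants are positive and do not affect the ordering, so working with the unnormalized densities is legitimate; and the argument is confined to the support of $p(r)$, where $g(r)>0$, so the ratio is well defined.

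The main obstacle is not the algebra --- the monotonicity collapses to the triviality $\varepsilon r \ge 0$ --- but rather (i) correctly identifying the direction of the dominance from the sign of $\log\phi$, since this is precisely the step that determines whether the noisy-oracle acceptance rule tightens or loosens the error-rate posterior relative to the clean setting, and (ii) cleanly deploying the MLR $\Rightarrow$ first-order stochastic dominance implication for continuous densities with a common, possibly vanishing, prior factor. Once the sign of $\phi-1$ (equivalently of $\frac{d}{dr}\log\phi$) is fixed, both the dominance ordering and the comparison of $\mathbb{E}[r]$ follow immediately.
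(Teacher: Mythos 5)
Your calculation is correct, and that is precisely the problem: carried to its honest conclusion it establishes the \emph{reverse} of the lemma. In the lemma's own variable you have
\[
\frac{f(r)}{g(r)} \;\propto\; \left(\frac{1-(1-\varepsilon)r}{(1-r)^{1-\varepsilon}}\right)^{m},
\qquad
\frac{d}{dr}\log\frac{f(r)}{g(r)} \;=\; m(1-\varepsilon)\left(\frac{1}{1-r}-\frac{1}{1-(1-\varepsilon)r}\right)\;\ge\; 0,
\]
since $1-(1-\varepsilon)r \ge 1-r$. So $f/g$ is \emph{increasing} in $r$, and the implication MLR $\Rightarrow$ first-order stochastic dominance then yields that $f$ stochastically dominates $g$, i.e.\ $\mathbb{E}_f[r]\ge\mathbb{E}_g[r]$ --- not that $f$ is dominated by $g$, which is what Lemma~\ref{lemma:alt_posterior} asserts and what the proof of Theorem~\ref{theorem:noisy_oracle} needs ($\mathbb{E}_f[r]\le\mathbb{E}_g[r]$). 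A direct check confirms the reversal: for $\varepsilon=\tfrac12$, $m=n=1$ and uniform $p(r)$, one gets $\mathbb{E}_f[r]=\tfrac49 > \tfrac25=\mathbb{E}_g[r]$. Your write-up hedges at exactly this point (``the direction dictated by the sign of $\log\phi$'', ``as needed in the proof of Theorem~\ref{theorem:noisy_oracle}''), but the direction \emph{is} the entire content of the lemma, and the sign you computed points the wrong way. The weighted AM--GM aside ($\phi\ge 1$ pointwise) cannot rescue this: a pointwise inequality between \emph{unnormalized} densities is washed out by normalization, so it says nothing about which posterior carries more mass where; only the monotonicity of the ratio is informative, and it is the unfavourable monotonicity.

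For comparison, the paper's own proof takes the same MLR route but differs in two places that happen to cancel: it silently swaps $r\leftrightarrow 1-r$ (its ratio is $((1-\varepsilon)r+\varepsilon)^m/r^{(1-\varepsilon)m}$, which flips the dominance direction), and it then declares the log-derivative $\frac{1}{(1-\varepsilon)r+\varepsilon}-\frac{1}{r}$ to be nonnegative, although the very inequality it invokes, $\varepsilon r\le\varepsilon$, makes it nonpositive. So the paper's argument is not sound either, and your correctly executed version of the same computation exposes this: the lemma as stated (under the standard definition of stochastic dominance) is false, and no completion of your outline can prove it. What your method does prove is the opposite ordering $f\succeq_{\mathrm{st}}g$; repairing Theorem~\ref{theorem:noisy_oracle} would therefore require restating the lemma with the dominance reversed and choosing the comparison pair $(m_{noisy},n_{noisy})$ so that the clean-oracle posterior dominates the noisy one, rather than the bound claimed here.
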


\begin{proof}
If a posterior proprtional to $f(x)$ has the monotone likelihood property wrt. another posterior proportional to $g(x)$
\begin{equation}
    \frac{d}{d x} \frac{f(x)}{g(x)} \geq 0 \quad \forall x
\end{equation}
the first posterior also stochastically dominates the second. 

We obtain the ratio
\begin{equation}
\begin{aligned}
f(x)/g(x) &= \frac{((1-\varepsilon) r + \varepsilon)^m (1-r)^{n-m} p(r)}{r^{ (1-\varepsilon)m} (1-r)^{n-m} p(r)} = \frac{((1-\varepsilon) r + \varepsilon)^m }{r^{ (1-\varepsilon)m} }
\end{aligned}
\end{equation}
We can investigate its derivative by taking the logarithm
\begin{align}
&\ln \frac{((1-\varepsilon) r + \varepsilon)^m }{r^{ (1-\varepsilon)m} } =m \ln ((1-\varepsilon)r + \varepsilon) - (1-\varepsilon) m \ln r
\end{align}
and further removing the multiplier $m \geq 1$
\begin{align}
\frac{d}{d r} \ln ((1-\varepsilon)r + \varepsilon) - (1-\varepsilon) \ln r &= \frac{1-\varepsilon}{(1-\varepsilon)r + \varepsilon} - \frac{1-\varepsilon}{r} \\
&\propto \frac{1}{(1-\varepsilon)r + \varepsilon} - \frac{1}{r} \\
&= \frac{1}{(1-\varepsilon)r + \varepsilon} - \frac{1}{(1-\varepsilon)r+\varepsilon r} 
%&= 1/((1-\varepsilon)r + \varepsilon) - 1/((1-\varepsilon)r + \varepsilon r) 
\end{align}
Since $\varepsilon r \leq \varepsilon$ for all $r \in [0, 1]$, the derivative is nonnegative and the property holds.
\end{proof}

%\newpage

\section{Additional data types}

%\subsection{Tabular data with an unknown number of rows, no index}

%$\mathcal S$ are sets consisting of vectors in $\mathbb Z^K$

\subsection{Tabular data with an unknown number of indexed rows}

$\mathcal S$ are sets consisting of tuples $(e, k, v) \in \mathbb Z^L \times \{1, \dots, K\} \times \mathbb R$. Each of the elements maps the index of an element, and the index of the value to the value. For example, an element indexed $e=123$ at the column $k=7$ has the value $v=0.324$. This allows for an indexed tabular data format with an arbitrary number of elements with missing values. Furthermore, $\mathcal S$ is defined to not have duplicates in terms of $(e,k, \cdot)$.

A suitable distance metric is the size of the symmetric set difference
\begin{equation}
d(S, S^\prime)= \sum_{s \in S^\prime} \mathbb I(s \notin S) + \sum_{s \in S} \mathbb I(s \notin S^\prime)
\end{equation}
The symmetric set difference can also be used as $\Delta(S, S^\prime)$ from which to sample edits. Intuetively, each missing value is punished with 1, and each incorrect value with 1; mapping a valid index $(e,k,\cdot)$ to an incorrect value is thus punished with 2. This relies on the elements being unique.

Single additions have a straightforward effect on the distance
\begin{equation}
    \begin{aligned}
        d(S + u, S^\prime) - d(S, S^\prime) &= \sum_{s \in S^\star} \mathbb I(s \notin S \cup \{u\}) \\
        &+ \sum_{s \in S \cup \{u\}} \mathbb I(s \notin S^\star) \\
        &- \sum_{s \in S^\star} \mathbb I(s \notin S) - \sum_{s \in S} \mathbb I(s \notin S^\star) \\
        &= -\mathbb I(u \in S^\star) + \mathbb I(u \notin S^\star) \\
        &= 1 - 2 \mathbb I(u \in S^\prime)\\
    \end{aligned}
\end{equation}
Addition to $S$ corresponds to deletion from $S^\prime$, which yields a similar formula for the deletion of $u$
\begin{equation}
    d(S + u, S^\prime) - d(S, S^\prime) = 1 - 2 \mathbb I(u \notin S^\prime)\\
\end{equation}
Furthermore, the metric fulfils the additivity assumption
\begin{equation}
    \begin{aligned}
        &d(S + u_1 + u_2, S^\star) - d(S, S^\star) 
        \\&= \sum_{s \in S^\star} \mathbb I(s \notin S \cup \{u_1, u_2\}) + \sum_{s \in S \cup \{u_1, u_2\}} \mathbb I(s \notin S^\star) \\
        &- \sum_{s \in S^\star} \mathbb I(s \notin S) - \sum_{s \in S} \mathbb I(s \notin S^\star) \\
        &= \sum_{s \in S^\star} -\mathbb I(s \in S \cup \{u_1, u_2\}) + \sum_{s \in \{u_1, u_2\}} \mathbb I(s \notin S^\star) \\
        &= \sum_{s \in \{u_1, u_2\}} -\mathbb I(s \in S^\star) + \sum_{s \in \{u_1, u_2\}} \mathbb I(s \notin S^\star) \\
        &= 1 - 2 \mathbb I(u_1 \in S^\star) + 1 - 2\mathbb I(u_2 \in S^\star) \\
        &= d(S + u_1, S^\star) - d(S, S^\star) + d(S +  u_2, S^\star) - d(S, S^\star)
    \end{aligned}
\end{equation}

\section{Diff sampling of text} \label{appendix:diffsampling}

For the sampling of edits in text data, \texttt{git diff --stat} is run to obtain the number of changes by file. Then, a stratified sample is taken of these files to obtain the number of edits to sample per each file, which all sum up to $n$, resulting $n_f$ for all $f \in F$, which is the set of files sampled. Then, \texttt{git diff} is run on the files where is at least 1 (which is $n$ different files at maximum). The result of each of these commands is piped to a .diff file, and for each file $f$, $n_f$ diffs are uniformly sampled. The procedure can be fully automated and is provided in the Supplementary Material as a Python package.

An example of an individual sampled edit is provided below:

\begin{lstlisting}
data/1907/prot-1907--ak--023.xml

Diff starting from line [2155](https://github.com/swerik-project/riksdagen-records/tree/join-seg-II/data/1907/prot-1907--ak--023.xml#L2155)
\end{lstlisting}

\begin{lstlisting}[language=diff]
@@ -2239,10 +2155,7 @@
         </note>
         <note xml:id="i-GpA3qd9kHzpzc5cXwJybo1">
           Angaende en ny forbindelseled mellan Baggensfjarden och en inre
-          Stockholmsskargarden.
-        </note>
-        <note xml:id="i-P9vjoFjDP3bFydxB7oD7M8">
-          (Forts )
+          Stockholmsskargarden. (Forts )
         </note>
         <note xml:id="i-PyYHcEah3YgC4sBc2fUQiQ" type="date">
           26 Onsdagen den 13 Mars, e. m.
\end{lstlisting}

\begin{lstlisting}
- [x] Correct
- [ ] Incorrect
\end{lstlisting}

This particular edit was deemed correct, as it merged an accidentally split paragraph.

\end{document}